\definecolor{MyBlue}{rgb}{0.12, 0.12, 0.76}
\newtheorem{theorem}{Theorem}[section]
\newtheorem{lemma}{Lemma}[section]
\newtheorem{definition}{Definition}[section]
\newtheoremstyle{named}{}{}{\itshape}{}{\bfseries}{.}{.5em}{\thmnote{#3's }#1}
\theoremstyle{named}
\let\oldReturn\Return
\renewcommand{\Return}{\State\oldReturn}
\DeclareMathOperator*{\argmax}{arg\,max}
\newcommand\X{\mathcal{X}}
\newcommand\suc{\texttt{succ}}
\newcommand\pred{\texttt{pred}}
\newcommand\V{\mathbf{v}}
\title{Almost Envy-free Repeated Matching in Two-sided Markets}
\author{Sreenivas Gollapudi\\ sgollapu@google.com \and Kostas Kollias\\ kostaskollias@google.com \and Benjamin Plaut\\ bplaut@cs.stanford.edu}
\date{}
\begin{document}

\maketitle

\begin{abstract}
A two-sided market consists of two sets of agents, each of whom have preferences over the other (Airbnb, Upwork, Lyft, Uber, etc.). We propose and analyze a repeated matching problem, where some set of matches occur on each time step, and our goal is to ensure fairness with respect to the cumulative allocations over an infinite time horizon. Our main result is a polynomial-time algorithm for additive, symmetric ($v_i(j) = v_j(i)$), and binary ($v_i(j) \in \{a,1\}$) valuations that both (1) guarantees \emph{envy-freeness up to a single match} (EF1) and (2) selects a maximum weight matching on each time step. Thus for this class of valuations, fairness can be achieved without sacrificing economic efficiency. This result holds even for \emph{dynamic valuations}, i.e., valuations that change over time. Although symmetry is a strong assumption, we show that this result cannot be extended to asymmetric binary valuations: (1) and (2) together are impossible even when valuations do not change over time, and for dynamic valuations, even (1) alone is impossible. To our knowledge, this is the first analysis of envy-freeness in a repeated matching setting.
\end{abstract}


\section{Introduction}\label{sec:intro}

Recent years have seen a dramatic increase in electronic marketplaces, both in quantity and scale. Many of these are \emph{two-sided} markets, meaning that the market makes matches between two sets of agents (homeowners and guests for Airbnb, employers and workers for Upwork, drivers and riders for Lyft and Uber, etc.), each of whom has preferences over the other. This is in contrast to traditional resource allocation (cake cutting, Fisher markets, auctions, etc.) where only one side of the market has preferences. Although envy-freeness and relaxations thereof have been studied extensively in one-sided resource allocation (this research area is typically referred to as ``fair division"), we are aware of just one paper considering envy-freeness for two-sided preferences~\cite{Patro2020}.\footnote{Although~\cite{Patro2020} is conceptually similar, it is in different setting of \emph{recommendation algorithms}, and so it is technically quite different.}

There are two primary motivations for our work. The first is to simply study fair division for two-sided preferences. The second is that in some ways, two-sided electronic marketplaces like Airbnb, Upwork, Lyft, and Uber are actually in a better position to impose fairness than one-sided marketplaces. The reason is that most one-sided markets are \emph{decentralized}, in the sense that a seller offers different goods at different prices, buyers peruse the wares at their leisure, and make individual decisions about what they wish to purchase. On the contrary, most two-sided markets operate by way of matches mediated by a centralized platform, giving the platform the ability to affect the outcomes of the system. Indeed, on Lyft and Uber, an automated central authority has almost complete control over the matches, giving the algorithm tremendous power over the outcomes for each individual agent. The power dynamic between the platform and the participating agents makes it even more important to ensure that the matching algorithms are fair to each agent.

\subsection{Repeated matching}

A crucial element of marketplaces is \emph{repeated} matching. Agents do not receive all of their matches at once; typically, an agent can only process a few matches at a time (a driver can only fit so many riders in the car, a worker can only handle so many contracts at once). Only after an agent completes some of her current matches can she be given new matches. This motivates a model where on each time step, an irrevocable matching decision must be made, and we expect fairness with respect to the cumulative matching at each time step. We consider an infinite time horizon and a finite set of agents, so we must allow the same pair of agents to be matched multiple times; thus each agent's cumulative set of matches will be a multiset.

A vital aspect of any repeated setting is that preferences can change. In some cases, preferences may change in direct response to the matches an agent receives: if a driver is matched with a rider who wishes to go to location X, once that ride is completed, the driver will prefer riders whose pickups are close to X. In other cases, agents may desire variety among matches: Airbnb guests may not wish to vacation in the same area every time. Additionally, preferences may simply drift over time. We refer to valuations that change over time as \emph{dynamic} valuations.

\subsection{Fairness notions}

For one-sided fair division with indivisible items, full envy-freeness is impossible: for two agents and a single item, one must receive the item and the other agent will be envious. The same issue applies in our setting: if every agent on one side of the market is interested in the same agent on the other side, no algorithm can guarantee envy-freeness.

One solution is to consider relaxations of envy-freeness, such as \emph{envy-freeness up to one good} (EF1). An outcome is EF1 if whenever agent $i$ envies agent $j$, there exists a good in $j$'s bundle such that $i$ would not envy $j$ after removing that good.\footnote{Note that this good is not actually removed: this is simply a thought experiment used in the definition of EF1.} This property has been studied widely for one-sided preferences, but to our knowledge has not been considered for two-sided markets. We can define EF1 equivalently for two-sided preferences: simply replace ``there exists a good" with ``there exists a match", etc.

The less obvious question is how to adapt EF1 to the repeated setting. In this paper, we assume time is divided into discrete steps, where on each step, some set of matches occur. Each match consists of two agents, one from each side of the market. We would like the cumulative matching after each time step to be EF1. We will also assume that each side of the market has the same number of agents (if not, add ``dummy" agents to the smaller side).

We consider two different versions of this model. In the first version, each time step consists of just a single match, so we are effectively requiring the cumulative matching to be EF1 at every point in time. We call this \emph{EF1-over-time}. 

However, asking for fairness at every point in time may be too strong. Furthermore, in most real-world applications, matches would be happening in parallel anyway. Conversely, EF1-over-time poses no restrictions on how many matches agents receive. In real life, agents often have similar ``capacities" (e.g., most cars have a similar number of seats) and thus  should arguably receive matches at similar rates. These concerns motivate a second version of the model, where on each time step, each agent is matched exactly once (i.e., we select a perfect matching). Thus each time step represents a ``round" of matches (which may happen in parallel). We still require that the cumulative matching is EF1 after time step, and we call this \emph{EF1-over-rounds}.

\subsection{Our results}

We use $v_i(j)$ to denote agent $i$'s value for agent $j$, and assume valuations are additive. We say that valuations are \emph{symmetric} if $v_i(j) = v_j(i)$ for all agents $i,j$, and \emph{binary} if there exists $a \in [0,1)$ such that $v_i(j) \in \{a,1\}$ for all $i,j$.\footnote{If we wish to add dummy agents to one side of the market, the most natural case would be $a=0$, in order to express that dummy agents have no value.} It is worth noting that for symmetric valuations, negative values for $v_i(j)$ are subsumed in the following sense: if the algorithm ever says to match agents $i$ and $j$ where $v_i(j) < 0$, we simply ignore this and never make the match, which gives both agents value 0 for the ``match".\footnote{In order for this argument to be complete when considering EF1-over-rounds, this ``match" must still count as part of the perfect matching for that time step.}


We now describe our results.

\paragraph{EF1-over-rounds for dynamic, symmetric, and binary valuations.} Our main result is that for dynamic, symmetric, and binary valuations, we give an algorithm which both satisfies EF1-over-rounds, and selects a maximum weight matching on each time step (Theorem~\ref{thm:sym-bin}) This holds even when valuations are dynamic.\footnote{We allow valuations to change arbitrarily between time steps. Furthermore, our algorithm does not need to know how valuations will change in response to a given match.}  This shows that for this class of valuations, fairness can be achieved without sacrificing economic efficiency. Our algorithm runs in time $O(n^{2.5})$ per time step.

The class of symmetric and binary valuations is somewhat restricted, but is important to keep several things in mind. First, it is often hard to elicit more complex valuations. Agents can easily answer binary questions such as ``Would you be happy with this match?", but may not be able to provide a real number value for potential matches. Second, the best interpretation of our result (in our opinion) is that agents' preferences are not truly binary, but that our algorithm is guaranteeing EF1 with respect to a \emph{binary projection of the preferences}. That is, ask each agent to label each possible match as ``good" or ``bad", and guarantee EF1 with respect to those preferences. This interpretation is reinforced by the fact that the cumulative matchings computed by our algorithm will be EF1 uniformly across all possible values $a$, and agents can even have different values of $a$. See Section~\ref{sec:sym-bin-setup} for a discussion of this.

Symmetry, however, is a significant assumption on the agents' preferences. There are reasons to believe real-world preferences are largely symmetric: a rider is likely to prefer a driver closer to her, and vice versa. However, a natural question is whether this assumption is necessary.

\paragraph{Counterexamples.} Our next set of results shows that the symmetry assumption is in fact necessary. First, we show that for dynamic binary valuations, EF1-over-rounds alone is impossible (Theorem~\ref{thm:dyn-bin-counter}). Second, for non-dynamic (i.e., valuations do not change over time) binary valuations, it is impossible to satisfy EF1-over-rounds while guaranteeing a maximum weight matching for each time step (Theorem~\ref{thm:bin-max-counter}). These impossibility results suggest that EF1-over-rounds may be too much to ask for in the setting of two-sided repeated matching. However, our counterexamples do not rule out the possibility of EF1-over-time, even for general additive valuations. We leave this as our primary open question.

\paragraph{Beyond symmetric valuations.} Despite this negative result, we show that it is possible to relax the symmetry assumption, at least in the context of EF1-over-time. We show that for $\{0,1\}$ binary valuations\footnote{For this result, we assume that $v_i(j) \in \{0,1\}$ for each agent, as opposed $v_i(j) \in \{a,1\}$ for any $a \in [0,1)$.} with an assumption that we call ``only symmetric cycles", EF1-over-time can be guaranteed (Theorem~\ref{thm:asym-bin}). We formally define ``only symmetric cycles" later, but this assumption is strictly weaker than full symmetry of valuations. 

\paragraph{Beyond binary valuations.} In a similar vein, we show that when one side of the market has two agents, the binary assumption can also be relaxed. Specifically, we give an algorithm which is EF1-over-time for any additive valuations (Theorem~\ref{thm:n=2}).

\subsection{Related work}\label{sec:related}

There are two primary bodies of related work: (one-sided) fair division, and matching markets.

\subsubsection{Fair division}

Fair division has a long history. In fact, the Bible documents Abraham and Lot's use of the cut-and-choose protocol to fairly divide land. The formal study of fair division was started by~\cite{steinhaus48} in 1948, and envy-freeness was proposed in 1958~\cite{Gamow1958a} and further developed by~\cite{Foley1967}. A full overview of the fair division literature is outside the scope of this paper (we refer the interested reader to~\cite{Moulin2004,Brandt2016}), and we discuss only the work most relevant to our own.

There are two main differences between our work and that of traditional fair division. First, we study two-sided preferences instead of one-sided preferences. Second, we study a repeated setting, where we must make an irrevocable decision on each time step; most fair division research considers a ``one-shot" model where all the goods are allocated at once.

We briefly overview some key results in the one-sided one-shot model of fair division for indivisible items\footnote{\emph{Indivisible} items, such as cars, must each go entirely to a single agent. In contrast, \emph{divisible} items, such as a cakes, can be split between multiple players. Fair division studies both of these settings, but the indivisible case is more relevant to our work.}. The EF1 property was proposed for this model by~\cite{Budish2011}. EF1 allocations always exist, and can be computed in polynomial time, even for general combinatorial valuations~\cite{Lipton2004}\footnote{The algorithm of~\cite{Lipton2004} was originally developed with a different property in mind.}. This sweeping positive result for the one-sided one-shot model lies in stark contrast to our negative results for the two-sided repeated model. It was later shown that for additive valuations, maximizing the product of valuations yields an allocation that is both EF1 and Pareto optimal~\cite{Caragiannis2016}. 
\
\paragraph{Envy-freeness up to any good (EFX)} The same paper~\cite{Caragiannis2016} suggested a new fairness notion that is strictly stronger than EF1, which they called EFX\footnote{An allocation is \emph{envy-free up to any good} (EFX) if whenever $i$ envies $j$, removing \emph{any} good from $j$'s bundle eliminates the envy.}. The first formal results regarding EFX allocations were given by~\cite{Plaut2018}. A major breakthrough recently proved the existence of EFX allocations for additive valuations and three agents~\cite{Chaudhury2020}, but despite ongoing effort, the question of existence remains unsolved for more than three agents (or more complex valuations). This is perhaps the most significant open problem in the fair division of indivisible items.

In many contexts (especially for additive valuations), it is common to modify the requirement to be that whenever $i$ envies $j$, removing any good which $i$ values positively from $j$'s bundle is sufficient to eliminate the envy~\cite{Caragiannis2016}.\footnote{The reason this is less common when considering non-additive valuations is that for general combinatorial valuations, it is less clear what ``values positively" means.} Under the latter definition, for $\{0,1\}$ binary valuations, EF1 and EFX coincide. This is because the only positively valued goods are the maximum value goods. In this sense, our positive results for $\{0,1\}$ binary valuations immediately extend to EFX as well.

\paragraph{Repeated fair division.} There are a smattering of recent works studying envy-freeness for one-sided preferences in a repeated (i.e., not one-shot) setting; see~\cite{survey2019} for a short survey. One example is ~\cite{Benade2018}, which focuses on minimizing the maximum envy (i.e., the maximum difference between an agent's value for her own bundle and her value for another agent's bundle) at each time step. Despite the growing interest in repeated one-sided fair allocation, the literature on the analogous two-sided problem remains sparse.

\subsubsection{Matching markets} 

The other relevant field is (bipartite) matching markets\footnote{For a broad overview of this topic, see~\cite{Roth1990}.}. In a \emph{one-to-one} matching market, each agent receives exactly one match. Perhaps the most famous result for one-to-one matching is that of Gale and Shapley, whose algorithm finds a stable matching~\cite{Gale1962}. More relevant to us is the model of \emph{many-to-many matching markets}, where each agent can receive multiple matches; stability has often been the primary criterion in this model as well~\cite{Echenique2006,Konishi2006,Sotomayor1999,Sotomayor2004}. There is also some work on stability in dynamic matching markets~\cite{Damiano2005,Haruvy2007}.

In contrast, fairness in many-to-many matching has received considerably less attention. In fact, Gale and Shapley's algorithm for one-to-one matching is known to compute the stable matching which is the worst possible for one side of the market, and best possible for the other.

\paragraph{Fair ride-hailing.} There is a growing body of work surrounding the ethics of crowdsourced two-sided markets, especially ride-hailing (e.g., Lyft and Uber)~\cite{Bokanyi2018,Bokanyi2019,Calo2017,Fieseler2019,Hannak2017}. We are aware of just two works studying fairness for two-sided markets from an algorithmic perspective: \cite{Wolfson2017} and \cite{Suehr2019}, both of which focus on ride-hailing. The former paper considers ride-\emph{sharing}, where multiple passengers are matched with a single driver. The authors focus on fairness with respect to the savings achieved by each passenger. This paper is primarily theoretical, like ours, but is specific to ride-hailing, unlike ours. The latter paper studies a fairness notion based on the idea that ``spread over time, all drivers should receive benefits proportional to the amount of time they are active in the platform". The model considered in this paper is more general than just ride-hailing, however the paper is primarily experimental, and the experiments are in the ride-hailing setting.

Consequently, we are not aware of any prior work studying algorithms with provable fairness guarantees for repeated two-sided matching markets. In this way, our work can be viewed as simultaneously building on the fair division literature (by considering two-sided preferences) and building on the matching market literature (by studying envy-freeness for repeated two-sided markets).\\

The paper proceeds as follows. Section~\ref{sec:model} describes the formal model. Section~\ref{sec:sym-bin} presents our main result: an algorithm for symmetric and binary valuations such that (1) the sequence of cumulative matchings is EF1-over-rounds, (2) a maximum weight matching is chosen for each time step, and (3) this holds even for dynamic valuations (Theorem~\ref{thm:sym-bin}). In Section~\ref{sec:counter}, we show that this cannot be extended to non-symmetric binary valuations, by showing that without symmetry, (1) and (3) together and (1) and (2) together are both impossible (Theorems~\ref{thm:dyn-bin-counter} and \ref{thm:bin-max-counter}, respectively). Section~\ref{sec:asym-bin} presents our result for binary valuations with only symmetric cycles (Theorem~\ref{thm:asym-bin}), and Section~\ref{sec:n=2} presents our result for the case when one side of the market has only two agents (Theorem~\ref{thm:n=2}).

\section{Model}\label{sec:model}

Let $N$ and $M$ be two sets of agents. We assume that $|N| = |M| = n$; if this is not the case, we can add ``dummy" agents (i.e., agents $i$ such that $v_i(j) = v_j(i) = 0$ for all $j$) to the smaller side of the market until both sides have the same number of agents. We will typically use odd numbers for the elements of $N$ and even numbers for the elements of $M$, i.e., $N = \{1, 3,\dots, 2n-1\}$ and $M = \{2,4,\dots, 2n\}$. A \emph{matching} $X$ assigns a multiset of agents in $N$ to each agent in $M$, and a multiset of agents in $M$ to each agent in $N$. For each $i \in N\cup M$, we will use $X_i$ to denote agent $i$'s \emph{bundle}, i.e., the multiset of agents she is matched to. Throughout the paper, we will use standard set notation for operations on the multisets $X_i$. For example, $X_i \cup \{j\}$ increments the \emph{multiplicity} of $j$ in $X_i$, i.e., the number of times $j$ occurs in $X_i$. In order for $X$ to be a valid matching, the multiplicity of $j$ in $X_i$ must be equal to the multiplicity of $i$ in $X_j$ for each $i \in N$, $j \in M$.

Each agent $i$ also has a \emph{valuation function} $v_i$, which assigns a real number to each possible bundle she might receive. We will use $\V$ to denote the \emph{valuation profile} which assigns valuation $v_i$ to agent $i$. We say that $v_i$ is \emph{additive} if for any bundle $X_i$,
\[
v_i(X) = \sum_{j \in X} v_i(\{j\})
\]
Since $X$ is a multiset, the sum over $j\in X$ includes each $j$ a number of times equal to its multiplicity. For example, if $X = \{j,j\}$ and $v_i(\{j\}) = 1$, then $v_i(X) = 2$. With slight abuse of notation, we will write $v_i(\{j\}) = v_i(j)$. We say that a valuation $v_i$ is \emph{binary} if there exists $a \in [0,1)$ such that $v_i(j) \in \{a,1\}$ for all $i,j \in N$ or $i,j \in M$. We say that a valuation profile $\V$ is \emph{symmetric} if $v_i(j) = v_j(i)$ for all $i\in N, j\in M$.

We say that $i$ \emph{envies} $j$ under $X$ if $v_i(X_i) < v_i(X_j)$. We only consider envy within the same side of the market: it is unclear what it would mean for some $i \in N$ to envy $j \in M$. We can express this by setting $v_i(j) = 0$ for $i,j \in N$ or $i,j \in M$.
\begin{definition}\label{def:ef1}
A matching $X$ is \emph{envy-free up to one match} (EF1) if whenever $i$ envies $j$, there exists $k \in X_j$ such that $v_i(X_i) \ge v_i(X_j\setminus\{k\})$.
\end{definition}

Note that the set subtraction $X_j\setminus\{k\}$ decreases the multiplicity of $k$ by 1; it does not remove $k$ altogether. Also, we will say that $X$ is EF1 with respect to $N$ (resp., $M$) if the above holds for every pair $i,j\in N$ (resp., $i,j \in M$).

One important tool we will use is the \emph{envy graph}:

\begin{definition}
The \emph{envy graph} of a matching $X$ is a graph with a vertex for each agent, and a directed edge from agent $i$ to agent $j$ if $i$ envies $j$ under $X$.
\end{definition}

We will especially be interested in cycles in the envy graph, and will use the terms ``cycle in the envy graph" and ``envy cycle" interchangeably.

\subsection{Repeated matching}

We consider a repeated setting, where on each time step $t$, some set of matches occur. Each ``match" (alternatively, \emph{pairing}) consists of one agent in $N$ and one agent in $M$.

Let $x^t$ denote the set of matches which occur at time $t$. Each agent will receive at most one match per time step. If agent $i$ is matched to an agent $j$ at time $t$, let $x^t_i = \{j\}$; otherwise, let $x^t_i = \emptyset$. For an infinite sequence $x^1, x^2, x^3\dots$, let $X^t$ denote the cumulative matching up to and including time $t$. Formally, for each $i \in N \cup M$,
\[
X_i^t = \begin{cases}
\emptyset & \text{if } t =0\\
X_i^{t-1}\cup x^t_i & \text{if } t>0
\end{cases}
\]
In words, $X_i^t$ is the set of matches $i$ has received up to and including time $t$.

Our main result holds even when valuations are allowed to vary over time. Specifically, a \emph{dynamic} valuation $v_i$ will have a value $v_i^t(j)$ for each agent $j$ on each time step $t$ (as before, we write $v_i^t(j) = 0$ for $i,j \in N$ or $i,j\in M$). A profile of dynamic valuations is symmetric if $v_i^t(j) = v_j^t(i)$ for all $i,j,t$. For a pair of agents $i,j$ (with $i=j$ allowed), $v_i(X_j^t)$ is given by
\[
v_i(X_j^t) = \sum_{t' = 1}^t v_i^{t'}(x^{t'}_j)
\]
where $v_i^{t'}(\emptyset) = 0$. That is, $i$'s value for a bundle $X_j$ is as if $i$ had received exactly those matches at exactly those times. It is important for this definition to include both $i = j$ and $i \ne j$, so that we can evaluate envy between agents.

We make no assumptions on how valuations change between time steps: they can even change adversarially, since our algorithm will not use any knowledge about future valuations when making matching decisions.

We consider two definitions of EF1 in the repeated matching setting. In both cases, we require the cumulative matching at the end of each time step to be EF1. The difference is that for EF1-over-time, each time step consists a single match, and for EF1-over-rounds, each time step consists of a ``round" of matches where all agents receive exactly one match (i.e., a perfect matching between $N$ and $M$).

\begin{definition}\label{def:ef1-over-time}
The sequence $\X = X^0, X^1,X^2\dots$ is \emph{EF1-over-time} if for all $t \ge 0$, each $x^t$ contains a single match, and $X^t$ is EF1.
\end{definition}

\begin{definition}\label{def:ef1-over-rounds}
The sequence $\X = X^0, X^1,X^2\dots$ is \emph{EF1-over-rounds} if for all $t \ge 0$, $x^t$ is a perfect matching, and $X^t$ is EF1.
\end{definition}

Formally, these notions are incomparable: EF1-over-time has a stronger fairness requirement (the cumulative matching should be EF1 after every match, not just after every round of matches), but does not require agents to receive the same number of matches. However, EF1-over-rounds does imply EF2-over-time (where we may remove two matches in order to eliminate the envy): expand each ``round" into $n$ time steps, each containing one match, in an arbitrary order. We know that at the end of each round of $n$ time steps, the cumulative matching is EF1. Within each round, each agent only gains one additional match, and we can always remove that match to return to an EF1 state.

Our goal will be to show the existence of (and efficiently compute) a sequence $x^1, x^2, x^3\dots$ such that the induced sequence $\X$ is EF1-over-time and/or EF1-over-rounds. For brevity, if an algorithm is guaranteed to produce a sequence $\X$ that is EF1-over-time (resp., EF1-over-rounds), we simply say that the algorithm is EF1-over-time (resp., EF1-over-rounds).

\section{EF1 for dynamic, symmetric, and binary valuations}\label{sec:sym-bin}

In this section, we consider binary and symmetric valuations that may change over time. For this class of valuations, we give a polynomial-time algorithm that produces a sequence which is EF1-over-rounds, and chooses a maximum weight matching for each time step. This leads to the following theorem:

\begin{restatable}{theorem}{thmSymBin}
\label{thm:sym-bin}
For dynamic, binary, and symmetric valuations, Algorithm~\ref{alg:sym-bin} is EF1-over-rounds, and the matching $x^t$ for each time step $t$ is a maximum weight matching (with respect to the valuations on that time step). Furthermore, the algorithm runs in time $O(n^{2.5})$ per time step.
\end{restatable}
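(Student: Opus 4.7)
The plan is to prove Theorem~\ref{thm:sym-bin} by induction on the round number $t$, maintaining an invariant that is both equivalent to EF1 (uniformly in $a$) and trackable via simple counters. For each agent $i$, let $\alpha_i(t)$ count the rounds $t' \le t$ in which $i$ received a match valued at $1$ from $i$'s then-current perspective, i.e.\ $v_i^{t'}(x_i^{t'})=1$. For a same-side pair $i \neq i'$, let $\beta_{i,i'}(t)$ count the rounds in which $i'$'s match was valued at $1$ by $i$, i.e.\ $v_i^{t'}(x_{i'}^{t'})=1$. Since valuations are binary with range $\{a,1\}$ and both $X_i^t$ and $X_{i'}^t$ have exactly $t$ elements, one has $v_i(X_i^t) = at + (1-a)\alpha_i(t)$ and $v_i(X_{i'}^t) = at + (1-a)\beta_{i,i'}(t)$. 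A short case analysis shows EF1 holds at time $t$, simultaneously for every $a \in [0,1)$, if and only if the invariant $\beta_{i,i'}(t) - \alpha_i(t) \le 1$ is satisfied for every same-side pair. Thus EF1-over-rounds reduces to preserving this inequality after every round.

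The algorithm at round $t$ works with the symmetric bipartite graph $G^t$ whose edges are the value-$1$ pairs (well-defined by symmetry). Since any perfect matching $\pi$ between $N$ and $M$ has weight $an + (1-a)|\pi \cap G^t|$, the maximum-weight perfect matchings are exactly those whose restriction to $G^t$ is a maximum bipartite matching of $G^t$. The algorithm therefore computes a maximum matching $M^t$ in $G^t$, extended arbitrarily to a perfect matching $\pi^t$, but subject to a constraint derived from the invariant: for every pair $(i,i')$ that is currently tight, namely $\beta_{i,i'}(t{-}1) - \alpha_i(t{-}1) = 1$, we require that if $(i,\pi^t(i')) \in G^t$ then $(i,\pi^t(i)) \in G^t$. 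This is precisely the condition that the tight pair's $\beta - \alpha$ does not increase during round $t$, so if it is met for every such pair the invariant is preserved.

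The main technical step is therefore the following key lemma: among all maximum matchings of $G^t$, there exists one whose completion to a perfect matching satisfies all of the above tight-pair constraints. The proof proceeds by starting from an arbitrary maximum matching of $G^t$ and repairing each violation via an alternating-path exchange in $G^t$. In a violation, $i'$ is matched along a $G^t$-edge incident to the good set of $i$ while $i$ is not; by symmetry of $G^t$, the edge from $i$ into $i$'s good set is present, and a standard alternating-path argument lets us move a $G^t$-edge from $i'$ to $i$ without decreasing the matching size. The symmetric-binary structure is essential here, and corresponds exactly to where the counterexamples of Theorems~\ref{thm:dyn-bin-counter} and~\ref{thm:bin-max-counter} break; one must further show that repairs for different tight pairs can be scheduled without re-introducing earlier violations, which is the main obstacle and where care is needed.

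Given the lemma, the induction goes through: at $t=0$ the invariant is vacuous, and at each subsequent round the algorithm outputs a perfect matching that is max-weight by construction and preserves the invariant, hence EF1. For the runtime, computing a maximum bipartite matching on $n+n$ vertices takes $O(n^{2.5})$ via Hopcroft--Karp, updating the $O(n^2)$ counters $\alpha_i$ and $\beta_{i,i'}$ and identifying tight pairs takes $O(n^2)$, and each repair is a single alternating-path computation within the same $O(n^{2.5})$ budget. This yields the claimed per-round complexity.
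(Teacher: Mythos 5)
Your setup is sound and matches the paper's: your counter invariant $\beta_{i,i'}(t)-\alpha_i(t)\le 1$ is exactly the paper's $(1-a)$-envy-boundedness (Definition~\ref{def:envy-bounded} and Lemmas~\ref{lem:envy-bounded} and~\ref{lem:envy-bounded-uniform}), and your observation that max-weight perfect matchings are exactly those maximizing the number of value-$1$ edges is Equation~\ref{eq:weight}. But the proof has a genuine gap, and you have located it yourself without closing it: the claim that ``repairs for different tight pairs can be scheduled without re-introducing earlier violations'' is not a detail to be handled with care --- it is the entire content of the theorem, and your key lemma is left unproven. As stated, your existence claim is not even obviously true under your invariant alone: a repair that hands a good edge from $i'$ to $i$ can make some other pair $(i'',i')$ or $(i'',i)$ tight-and-violated, and nothing in your argument bounds how long this cascade runs or shows it cannot cycle. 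Also, your assertion that ``by symmetry of $G^t$, the edge from $i$ into $i$'s good set is present'' does not follow; symmetry only makes $G^t$ well-defined as an undirected graph, and the actual repair in the paper is a plain transposition of $i$'s and $i'$'s tentative partners, not a general alternating-path exchange.

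The missing idea is a second inductive invariant: the envy graph of the cumulative matching $X^{t}$ is acyclic. The paper carries this alongside envy-boundedness (Lemma~\ref{lem:full-stage}) and uses it in two essential ways. First, it is what terminates the repair cascade: each repair moves the $N$-endpoint (or $M$-endpoint) of some good edge strictly forward along the envy DAG of $X^{t-1}$, so each of the at most $n$ good edges can be repaired at most $n$ times per side, giving the $O(n^2)$ bound (Lemma~\ref{lem:termination}). Second, acyclicity is itself preserved precisely because $x^t$ is a maximum weight matching: an envy cycle in $X^t$ would let one cyclically permute the round-$t$ partners of the agents in the cycle to obtain a strictly heavier matching, where symmetry is used to argue the other side of the market loses nothing (Lemmas~\ref{lem:ec-property} and~\ref{lem:no-ec}). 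Without adding this invariant and these two arguments, your induction does not close, so the proposal as written does not constitute a proof.
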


The section is organized as follows. In Section~\ref{sec:sym-bin-setup}, we define our algorithm and discuss some basic properties. Section~\ref{sec:roadmap} gives some intuition behind the algorithm and proof. The rest of the section (Sections~\ref{sec:sym-bin-max} -- \ref{sec:sym-bin-final}) is devoted to proving Theorem~\ref{thm:sym-bin}.

\subsection{Algorithm setup}\label{sec:sym-bin-setup}

Before we discuss the algorithm, we need the following definition, which will imply EF1 (Lemma~\ref{lem:envy-bounded}):

\begin{definition}\label{def:envy-bounded}
We say that a pair of agents $(i,j)$ is \emph{$c$-envy-bounded} if $v_i(X_j) - v_i(X_i) \le c$, and we say a matching $X$ is $c$-envy-bounded if every pair $(i,j)$ is $c$-envy-bounded.
\end{definition}

A quick note: recall that our goal is to choose a sequence of pairings $x^1, x^2\dots$, and that these pairings fully specify the sequence of cumulative matchings $\X$. Consequently, when giving pseudocode for our algorithms (throughout the paper), we do not explicitly update $\X$: we assume that whenever some $x^t$ is changed, every $X^{t'}$ for $t' \ge t$ is automatically updated. We feel that this leads to more concise and intuitive pseudocode.

Algorithm~\ref{alg:sym-bin} is very simple. For each time step $t$, we initialize $x^t$ to be an arbitrary maximum weight matching for the current valuations, and make changes to this matching until we are satisfied. Specifically, while there exist agents $i,j$ such that $(i,j)$ is not $(1-a)$-envy-bounded in the cumulative matching, we swap their matches in $x^t$. When no such pair of agents exists, we exit the while loop and confirm the matches. Throughout all of our algorithms, we will use the function \texttt{MakeMatch} to indicate that we are confirming the matches in $x^t$.

\begin{algorithm*}[htb]
\centering
\begin{algorithmic}[1]
\Function{EF1Matching}{$N, M, \V$}
\ForAll{$t\in \mathbb{N}_{\ge 0}$}
    \State $\{x^t\} \gets \texttt{MaxWeightMatching}(N,M, \V)$
    \While{$\exists$ agents $i,j$ s.t. $v_i(X_j^t) - v_i(X_i^t) > 1-a$}
    	\State $(x^t_i, x^t_j) \gets (x^t_j, x^t_i)$
    \EndWhile
\State \texttt{MakeMatch}($x^t$)
  \EndFor
\EndFunction
\end{algorithmic}
\caption{An EF1-over-rounds algorithm for agents with dynamic, symmetric, and binary valuations.}
\label{alg:sym-bin}
\end{algorithm*}

It is important to note that the algorithm is \emph{not} going back in time and changing pairings already made: once a pairing is confirmed with \texttt{MakeMatch}, it is never changed. The algorithm starts with a tentative matching, and changes \emph{tentative} matches until it is satisfied for the current time step (see Figure~\ref{fig:sym-bin-swap}), at which point the matches are confirmed with \texttt{MakeMatch}. The algorithm then proceeds to the next time step and never changes pairings from previous time steps. Note also that the algorithm uses no information about valuations for future time steps.

\begin{figure}[htb]
\centering
\begin{tikzpicture}[thick,every node/.style={draw,circle,minimum size=0 mm, node distance=1.7 cm, font=\footnotesize,fill=blue!20}]
  \node (1) at (0,0) {$1$};
  \node (3) [right of=1] {$3$};
  \node (2) [below of=1] {$2$};
  \node (4) [right of=2] {$4$};
  \node (5) at (5,0) {$1$};
  \node (7) [right of=5] {$3$}; 
  \node (6) [below of=5] {$2$}; 
  \node (8) [right of=6] {$4$};
  \node[draw=none,fill=none] (9) [below right of=3] {};
  \node[draw=none,fill=none] (10) [below left of=5] {};
  \node[draw=none,fill=none] (N) [left of=1] {$N:$};
  \node[draw=none,fill=none] (M) [left of=2] {$M:$};
  
  \draw[->] (9.north west) -- (10.north east);
  
\path[->,color=blue, line width=.5mm]
	(3) edge[bend right] (1)
	;
  
\path[every node/.style={font=\small}]
	(1) edge node[left] {} (2)
	(3) edge node[right] {} (4)
	(5) edge node[left] {} (8)
	(7) edge node[right] {} (6)		
	;
\end{tikzpicture}
\caption{A hypothetical swap performed by Algorithm~\ref{alg:sym-bin}. On the left we see a tentative perfect matching: $(\{1,2\}, \{3,4\})$. The blue arrow indicates that if this matching were to be chosen, the pair $(3,1)$ would not be $(1-a)$-envy-bounded. Thus agents 1 and 3 swap their (tentative) matches, and the new tentative matching is $(\{1,4\}, \{3,2\})$. The matching $(\{1,2\}, \{3,4\})$ is never confirmed by \texttt{MakeMatch}: it is merely a stepping stone in the process of computing the eventual matches to be chosen for this time step. For the case of more than four agents, this process could repeat (although not indefinitely; see Lemma~\ref{lem:termination}).}
\label{fig:sym-bin-swap}
\end{figure}
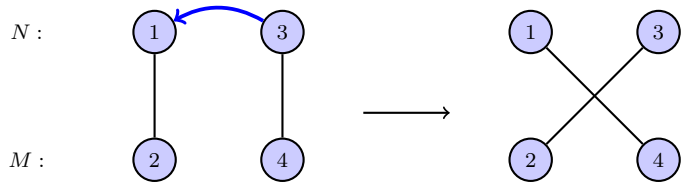

Our central correctness lemma will be the following:

 \begin{restatable}{lemma}{lemFullStage}
 \label{lem:full-stage}
Let $t \ge 1$ be any time step, and suppose that $X^{t-1}$ is $(1-a)$-envy-bounded and has no envy cycles. Then $X^t$ is $(1-a)$-envy-bounded and has no envy cycles. Furthermore, the chosen matching $x^t$ is a maximum weight matching (with respect to the valuations on that time step).
 \end{restatable}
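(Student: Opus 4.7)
The plan is to split the lemma into four sub-claims and prove them in order: (a) each swap inside the while loop preserves the maximum-weight property of $x^t$; (b) the while loop terminates in polynomially many iterations; (c) upon termination $X^t$ is $(1-a)$-envy-bounded; and (d) upon termination $X^t$ has no envy cycles. Part (c) is immediate from the exit condition of the loop, so the real work lies in (a), (b), and (d).

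For (a), I will analyze what can trigger a swap. Because $X^{t-1}$ is $(1-a)$-envy-bounded, valuations are binary in $\{a,1\}$, and each agent gains exactly one match per round, envies change by integer multiples of $(1-a)$; the swap condition $v_i(X_j^t)-v_i(X_i^t)>1-a$ can therefore only fire when $v_i(x^t_j)=1$, $v_i(x^t_i)=a$, and additionally $i$ already envied $j$ in $X^{t-1}$. Symmetry gives $v_{x^t_j}(i)=1$ and $v_{x^t_i}(i)=a$, and max-weight of the current $x^t$ then forces the analogous pattern at $j$, namely $v_j(x^t_j)=1$ and $v_j(x^t_i)=a$. Hence the swap exchanges two pairs of identical total weight. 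A key by-product, obtained by applying symmetry once more, is that the two $M$-side partners involved in the swap see no change in the value of their own match (both values were $a$ and $1$ respectively, determined purely by the $N$-side values).

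For (b), I will use the no-envy-cycle hypothesis on $X^{t-1}$ to fix a topological ordering $\sigma$ of its envy graph (within $N$ and within $M$ separately), and define the potential $\Phi(x^t) = \sum_{p \in N \cup M} v_p(x^t_p)\cdot \sigma(p)$. Whenever agents $i,j$ on the same side are swapped, step (a) shows that $i$ envied $j$ in $X^{t-1}$, so $\sigma(i)<\sigma(j)$; the swap raises $v_i(x^t_i)$ by $1-a$ and drops $v_j(x^t_j)$ by $1-a$, while (by the by-product of (a)) leaving every other $v_p(x^t_p)$ untouched. Therefore $\Phi$ strictly drops by at least $(1-a)(\sigma(j)-\sigma(i))\ge 1-a$ per swap, and since $\Phi$ is bounded by $O(n^2)$ the loop terminates within $O(n^2)$ iterations.

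For (d), I will use a cycle-rerouting argument that relies only on symmetry and per-step max-weight. Any envy cycle $i_1\to i_2\to\cdots\to i_k\to i_1$ in $X^t$ would satisfy $\sum_l (v_{i_l}(X_{i_{l+1}}^t)-v_{i_l}(X_{i_l}^t))>0$. But for each past step $s\le t$, cyclically rerouting the matches of $i_1,\ldots,i_k$ in $x^s$ (giving $i_l$ the partner $x^s_{i_{l+1}}$) yields another valid perfect matching whose total weight, by symmetry, equals that of $x^s$ plus $2\sum_l (v_{i_l}^s(x^s_{i_{l+1}})-v_{i_l}^s(x^s_{i_l}))$; max-weight of $x^s$ forces this per-step sum to be $\le 0$, so summing over $s$ contradicts the strict positive cycle sum. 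The step I expect to be the main obstacle is (b): identifying the right potential depends both on the structural by-product of (a) that a same-side swap does not touch opposite-side match values, and on having a topological order of the envy graph of $X^{t-1}$ supplied by the inductive hypothesis.
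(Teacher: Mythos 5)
Your decomposition and step (a) essentially reproduce the paper's argument (Lemmas~\ref{lem:swap-property}, \ref{lem:e-one-round}, \ref{lem:max-matching}, and the first half of Lemma~\ref{lem:termination}): the swap condition plus $(1-a)$-envy-boundedness of $X^{t-1}$ forces $v_i^t(\ell)=1$, $v_i^t(k)=a$, and prior envy of $i$ toward $j$; symmetry and max-weight of the current $x^t$ then pin down $v_j^t(\ell)=1$, $v_j^t(k)=a$, so the swap is weight-neutral and leaves the $M$-side agents' own match values unchanged. Where you genuinely diverge is in (b) and (d). For termination, the paper tracks how each edge of $E_t(x^t)$ migrates along envy edges of the DAG of $X^{t-1}$ and bounds the number of endpoint changes by $n$ per edge; your potential $\Phi(x^t)=\sum_p v_p(x^t_p)\sigma(p)$ over a topological order is a cleaner formalization of the same idea, gives the same $O(n^2)$ bound (the range of $\Phi$ is $(1-a)\cdot O(n^2)$ and each swap drops it by at least $1-a$), and is arguably easier to verify. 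For the absence of envy cycles, the paper argues entirely at time $t$: since $X^{t-1}$ had no cycle, some edge of the cycle is new, giving a strict preference at step $t$, and $(1-a)$-envy-boundedness of $X^t$ gives weak preferences on the remaining edges, so cyclically permuting the matches within $x^t$ alone strictly increases $W^t$. Your version instead sums the cyclic-rerouting inequality over \emph{all} steps $s\le t$, which is elegant and bypasses the $(1-a)$-envy-boundedness bookkeeping of Lemma~\ref{lem:ec-property} entirely.

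The one thing you must patch: your step (d) requires that $x^s$ be a maximum weight matching for \emph{every} $s\le t$, but the lemma's hypothesis only speaks about $X^{t-1}$ being $(1-a)$-envy-bounded and cycle-free. As written, your proof of the lemma uses a fact not among its hypotheses. This is not a mathematical obstruction --- the lemma's own conclusion at earlier time steps supplies exactly this fact, so the overall induction in Theorem~\ref{thm:sym-bin} goes through --- but you should either add ``$x^s$ is a maximum weight matching for all $s<t$'' to the hypotheses (and note it is preserved), or fold (d) into the outer induction rather than presenting it as a consequence of the stated assumptions alone. The paper's single-step argument avoids this issue, which is presumably why it carries the $(1-a)$-envy-boundedness condition through Lemma~\ref{lem:ec-property}.
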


Before diving into the proof of Lemma~\ref{lem:full-stage} (and the runtime analysis), we briefly show that $(1-a)$-envy-boundedness will actually give us the result we want:
 
\begin{lemma}\label{lem:envy-bounded}
Suppose valuations are binary, and suppose $X$ is $(1-a)$-envy-bounded. Then $X$ is EF1.
\end{lemma}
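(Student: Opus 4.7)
The plan is to exploit the binary structure of valuations to locate a single ``heavy'' match $k \in X_j$ whose removal already wipes out $i$'s envy, using the $(1-a)$-envy-bound to control how much damage $k$ needs to do.

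Formally, suppose $i$ envies $j$ under $X$, so $0 < v_i(X_j) - v_i(X_i) \le 1-a$. The pivotal claim is that $X_j$ contains some $k$ with $v_i(k) = 1$. I would prove this by contradiction. If every $k \in X_j$ satisfied $v_i(k) = a$, then binary valuations give $v_i(X_j) = a\,|X_j|$. Since $v_i(k') \ge a$ for each $k' \in X_i$ as well, $v_i(X_i) \ge a\,|X_i|$. In the EF1-over-rounds regime in which Lemma~\ref{lem:envy-bounded} is used (via Algorithm~\ref{alg:sym-bin}), each time step appends exactly one match to every agent's bundle, so all same-side bundles have the same size; that is, $|X_i| = |X_j|$. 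This would force $v_i(X_j) \le v_i(X_i)$, contradicting envy.

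Given such a $k \in X_j$, the EF1 inequality is then immediate:
\[
v_i(X_j \setminus \{k\}) \;=\; v_i(X_j) - 1 \;\le\; \bigl(v_i(X_i) + (1-a)\bigr) - 1 \;=\; v_i(X_i) - a \;\le\; v_i(X_i),
\]
which matches the condition in Definition~\ref{def:ef1}.

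The main obstacle is the existence-of-a-$1$-valued-element step. Without equal bundle sizes, the lemma admits easy counterexamples (take $|X_j| \gg |X_i|$ with all of $X_j$ being $a$-valued from $i$'s point of view), so the argument genuinely depends on the ``one match per agent per round'' property of the ambient algorithm; the rest is an unassuming one-line calculation. I would therefore make the reliance on that property explicit at the start of the proof and then dispatch the remainder as above.
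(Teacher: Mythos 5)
Your proof is correct and follows essentially the same route as the paper's: show that $X_j$ must contain a match $k$ with $v_i(k)=1$ (else $i$ could not envy $j$), remove it, and apply the $(1-a)$ bound to get $v_i(X_j\setminus\{k\}) \le v_i(X_i)$. The one substantive difference is that you explicitly flag that the existence of such a $k$ relies on $|X_i| = |X_j|$ (guaranteed in the EF1-over-rounds setting where the lemma is invoked), whereas the paper's proof asserts ``if $v_i(X_j) = a|X_j|$ then $v_i(X_i) \ge v_i(X_j)$'' without justification --- a step that indeed fails for $a>0$ when $|X_j| > |X_i|$, exactly as your counterexample shows --- so your version is, if anything, the more careful of the two.
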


\begin{proof}
Suppose $i$ envies $j$ under $X^t$. If $v_i(X_j^t) = a|X_j^t|$, then $v_i(X_i^t) \ge v_i(X_j^t)$, which contradicts $i$ envying $j$. Thus $v_i(X_j^t) \ge 1 + a(|X_j^t| - 1)$. Thus there exists $k \in X_j^t$ such that $v_i(X_j^t\setminus\{k\}) = v_i(X_j^t) - 1$. Therefore $v_i(X_i) - v_i(X_j^t\setminus\{k\}) \ge 1 + (a-1) = a \ge 0$, which proves the claim.
\end{proof}

\subsubsection{The role of $a$}

Before diving into the main proof, we briefly discuss the role of $a$. For Theroem~\ref{thm:sym-bin}, we assume that there exists $a \in [0,1)$ such that $v_i(j) \in \{a,1\}$ for all $i,j \in N$ or $i,j \in M$. For ease of notation, we assume that all agents have the same value of $a$, but this is in fact not necessary. In fact, Algorithm~\ref{alg:sym-bin} will be EF1-over-rounds simultaneously for all values of $a$.

\begin{lemma}\label{lem:envy-bounded-uniform}
Assume $(i,j)$ is $(1-a)$-envy-bounded and that $|X_i^t| = |X_j^t|$. Let $a' \in [0,1)$, and define a new valuation $v'_i$ such that $v'_i(k) = a'$ whenever $v_i(k) = a$, and $v'_i(k) = 1$ otherwise. Then $(i,j)$ is $(1-a')$-envy-bounded with respect to $v'_i$.
\end{lemma}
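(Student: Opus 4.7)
The plan is to reduce the inequality to a purely combinatorial statement about counts of $1$-valued items, which is invariant under changing $a$.

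First, I would rewrite $v_i$ in the convenient form $v_i(Y) = a|Y| + (1-a) n_1(Y)$, where $n_1(Y)$ denotes the number of elements in $Y$ to which $v_i$ assigns value $1$ (counted with multiplicity). Using $|X_i^t| = |X_j^t|$, the cancellation of the $a|Y|$ term gives
\[
v_i(X_j^t) - v_i(X_i^t) = (1-a)\bigl(n_1(X_j^t) - n_1(X_i^t)\bigr).
\]
The hypothesis that $(i,j)$ is $(1-a)$-envy-bounded, combined with $1-a > 0$, therefore forces the integer $n_1(X_j^t) - n_1(X_i^t)$ to be at most $1$.

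Next I would observe that the valuation $v'_i$ admits the exact same decomposition, but with $a'$ in place of $a$, and that the set of $1$-valued items is preserved by construction: $n_1'(Y) = n_1(Y)$ for every $Y$. Applying the same identity under $v'_i$ yields
\[
v'_i(X_j^t) - v'_i(X_i^t) = (1-a')\bigl(n_1(X_j^t) - n_1(X_i^t)\bigr) \le (1-a') \cdot 1 = 1 - a',
\]
where the inequality uses the integer bound extracted above together with $1 - a' > 0$. This is exactly the definition of $(1-a')$-envy-boundedness for the pair $(i,j)$ under $v'_i$.

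I do not expect a real obstacle here; the content of the lemma is that when $|X_i^t| = |X_j^t|$, envy scales linearly in the gap $1-a$, so rescaling $a$ to $a'$ automatically rescales the envy bound. The only subtle point to make explicit is why the difference $n_1(X_j^t) - n_1(X_i^t)$ must be an \emph{integer} bounded by $1$ (rather than merely a real number bounded by $1$): it is a difference of nonnegative integer multiplicities, and the hypothesis $v_i(X_j^t) - v_i(X_i^t) \le 1-a$ after dividing by $1-a$ gives a bound of $1$, which for an integer is tight. This integrality is what lets the bound transfer cleanly to any other $a' \in [0,1)$.
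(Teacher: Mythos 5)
Your proposal is correct and follows essentially the same route as the paper: the paper writes $v_i(X)=\kappa_i(X)+a(|X|-\kappa_i(X))$ (your $n_1$ is its $\kappa_i$), cancels the $a|X|$ terms using $|X_i^t|=|X_j^t|$, extracts the bound $\kappa_i(X_j^t)-\kappa_i(X_i^t)\le 1$, and rescales by $1-a'$. One small remark: the integrality of $n_1(X_j^t)-n_1(X_i^t)$ that you highlight at the end is not actually needed --- a real quantity bounded by $1$ already transfers to the bound $1-a'$ after multiplying by $1-a'>0$ --- so that observation is harmless but superfluous.
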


\begin{proof}
For a bundle $X$, let $\kappa_i(X)$ be the number of matches in $X$ for which agent $i$ has value 1. Then we have $v_i(X) = \kappa_i(X) + a(|X| - \kappa_i(X)) = a|X| - (1-a)\kappa_i(X)$. Therefore,
\begin{align*}
v_i(X_j^t) - v_i(X_i^t) =&\ \big(a|X_j^t| - (1-a)\kappa_i(X_j^t)\big) - \big(a|X_i^t| - (1-a)\kappa_i(X_i^t)\big)\\
=&\ (1-a)(\kappa_i(X_i^t) - \kappa_i(X_j^t))
\end{align*}
Since $(i,j)$ is $(1-a)$-envy-bounded with respect to $v_i$, we have $v_i(X_j^t) - v_i(X_i^t) \le 1-a$, which implies $\kappa_i(X_i^t) - \kappa_i(X_j^t) \le 1$. Therefore
\[
v'_i(X_j^t) - v'_i(X_i^t) = (1-a')(\kappa_i(X_i^t) - \kappa_i(X_j^t)) \le (1-a')
\]
Thus $(i,j)$ is $(1-a')$-envy-bounded with respect to $v'_i$, as required.
\end{proof}

Note that the assumption of $|X_i^t| = |X_j^t|$ is always satisfied when working with EF1-over-rounds, since we will match every agent once on each time step. Therefore we can actually just choose an arbitrary value of $a \in [0,1)$ and run Algorithm~\ref{alg:sym-bin}. Lemma~\ref{lem:envy-bounded-uniform} implies that the resulting sequence of matchings will be EF1-over-rounds simultaneously for all values of $a$, even if different agents have different values of $a$. That said, if we need to include dummy agents in order to equalize the sizes of $N$ and $M$, $a = 0$ probably makes the most sense.

\subsection{Proof intuition}\label{sec:roadmap}

\textbf{Termination of the while loop.} The while loop condition ensures that once the while loop terminates, the cumulative matching is $(1-a)$-envy-bounded. Perhaps the first concern is whether the while loop for a given time step actually terminates. The key property we will need is the absence of envy cycles.

Consider an arbitrary iteration of the while loop, and define $i,j$ as in Algorithm~\ref{alg:sym-bin}. We will show that $v_i(X_j^t) - v_i(X_i^t) > 1-a$ implies two things: (1) $i$ envied $j$ before their most recent matches, and (2) $i$ prefers $j$'s proposed match to her own. When $i$ and $j$ swap their proposed matches, because of (2), $i$'s utility increases. The swap may hurt $j$, but because $i$ precedes $j$ in the envy DAG (here we crucially use the absence of envy cycles), we can show that only so many swaps occur on this side of the market before termination.

However, we also must consider what a swap does to the other side of the market. We show that any swap does not hurt the other side of the market, and the crucial property for this is symmetry of valuations. Let $k$ and $\ell$ denote $i$ and $j$'s most recent matches before the swap, respectively. Since $i$ prefers $\ell$ to $k$, and valuations are binary, we have $v_i(\ell) = 1$ and $v_i(k) = a$. Then by symmetry, $v_\ell(i) = 1$ and $v_k(i) = a$. Thus after the swap, $\ell$ is happy (because she is matched to $i$), and before the swap, $k$ is unhappy (because she is matched to $i$). Thus the swap cannot hurt either $k$ or $\ell$. This is important, because this means that this swap does not affect the progress we are making with respect to the envy DAG on the other side of the market.

\textbf{Maximum weight matching for each time step.} The above reasoning also allows us to show that Algorithm~\ref{alg:sym-bin} produces a maximum weight matching for each time step. The proof of this is inductive, using the fact that the matching is initially maximum weight by construction, and that no swap decreases the weight of the matching. Define $i,j,k,\ell$ as above. We know that $i$ dislikes her match before the swap, and likes her match after the swap. We also know that $k$ dislikes her match before the swap, and $\ell$ likes her match after the swap. We will claim that out of $i,j,k,\ell$, at most two are happy before the swap, and at least two are happy after the swap. This allows us to show that the weight of the matching does not decrease (and since it is initially maximum weight, does not increase).

\textbf{Absence of envy cycles.} The above arguments only work if there are no envy cycles at the beginning of each time step $t$. The key property we need for this is that $x^{t-1}$ (i.e., the matching from the previous time step) is a maximum weight matching. We show that the existence of an envy cycle $C$ would imply the presence of a higher weight matching, which is a contradiction.

Consider a 2-cycle between agents $i$ and $j$. If we assume inductively that there were no envy cycles at the end of the previous time step, at least one of these edges must be new; say that $i$ did not previously envy $j$. This implies that $i$ strictly prefers $j$'s most recent match to her own. Furthermore, we show that $j$ weakly prefers $i$'s most recent match to her own: otherwise she would not envy $i$. Thus if agents $i$ and $j$ swapped matches, $i$ would be strictly happier and $j$ would be weakly happier. Using the same reasoning as above, such a swap does not hurt the other side of the market, so the weight of the matching has increased. The same intuition holds for envy cycles with more than two agents.

\textbf{Putting it all together.} The final proof does a simultaneous induction of several properties. Specifically, for each time step $t$, if the cumulative matching $X^t$ (1) is $(1-a)$-envy-bounded, and (2) has no envy cycles, then both (1) and (2) hold for $X^{t+1}$. Since $(1-a)$-envy-boundedness implies EF1, this will show that the algorithm is EF1-over-rounds.

Keeping this intuition in mind, the proof proceeds as follows. Sections~\ref{sec:sym-bin-max} shows that if $X^{t-1}$ is $(1-a)$-envy-bounded, then $x^t$ will be a maximum weight matching (Lemma~\ref{lem:max-matching}). Section~\ref{sec:termination} shows that the while loop terminate after $O(n^2)$ iterations (Lemma~\ref{lem:termination}), and Section~\ref{sec:envy-cycles} shows that no envy cycles are created (Lemma~\ref{lem:no-ec}). Section~\ref{sec:sym-bin-final} puts all of this together and proves Theorem~\ref{thm:sym-bin}.

\subsection{We end up with a maximum weight matching}\label{sec:sym-bin-max}

Next, we provide some brief definitions.

\begin{definition}\label{def:stage-matching}
The weight of a perfect matching $x$ is given by
\[
W^t(x) = \frac{1}{2}\sum_{\{i,j\} \in x} v_i^t(j) + v_j^t(i)
\]
\end{definition}

When valuations are symmetric, we have $v_i^t(j) + v_j^t(i) = 2v_i^t(j)$, but we feel it is better to give a more general definition that does not depend on assumptions on agent preferences.

Thus \texttt{MaxWeightMatching} simply selects a matching with maximum weight. Since our graph is bipartite, this can be done by standard maximum matching algorithms in time $O(n^{2.5})$. Note that since valuations are dynamic (i.e., can change over time), different time steps may have different maximum weight matchings. Thus we say that $x^t$ is a maximum weight matching, we mean with respect to the valuations on time step $t$.

Consider an arbitrary perfect matching $x$. Since valuations are symmetric, each match is either a ``good match" (i.e., both agents have value 1 for the match) or a ``bad match" (i.e., both agents have value $a$ for the match). Let $E_t(x)$ be the set of ``good matches", i.e.,
\[
E_t(x) = \{\{i,j\} \in x: v_i^t(j) = v_j^t(i) = 1\}
\]
Note that $E_t(x)$ is a set of (undirected) edges between $N$ and $M$ such that each agent participates in at most one edge. Every edge in $E_t(x)$ contributes 1 to $W^t(x)$, and every edge in $x \setminus E_t(x)$ contributes $a$. Thus the size of $E_t(x)$ fully determines $W^t(x)$; in particular,
\begin{align}
W^t(x) = |E_t(x)| + a(|x|-|E_t(x)|) \label{eq:weight}
\end{align}

\begin{lemma}\label{lem:swap-property}
Suppose that $X^{t-1}$ is $(1-a)$-envy-bounded. For an arbitrary iteration of the while loop during time step $t$, let $i,j$ be as defined in Algorithm~\ref{alg:sym-bin}, and let $\{i,k\}$, $\{j,\ell\}$ be the matches before the swap. Then $v_i^t(k) = a$ and $v_i^t(\ell) = 1$. Also, $v_i(X^{t-1}_i) < v_i(X^{t-1}_j)$.
\end{lemma}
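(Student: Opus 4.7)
The proof is a direct unrolling of the while-loop condition, combined with the inductive hypothesis that $X^{t-1}$ is already $(1-a)$-envy-bounded and the binary structure of the valuations. Conceptually there is nothing deep to do: we just separate the "history" from the "current step" inside the inequality $v_i(X_j^t) - v_i(X_i^t) > 1-a$ and read off what the current matches of $i$ and $j$ must look like.

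The first step is to observe that, because every agent receives at most one match per time step, the current tentative bundles decompose as
\[
X_i^t = X_i^{t-1} \cup \{k\}, \qquad X_j^t = X_j^{t-1} \cup \{\ell\},
\]
and by additivity this yields
\[
v_i(X_j^t) - v_i(X_i^t) \;=\; \bigl(v_i(X_j^{t-1}) - v_i(X_i^{t-1})\bigr) \;+\; \bigl(v_i^t(\ell) - v_i^t(k)\bigr).
\]
Here it is important to note that although intermediate swaps may have modified $x^t$ earlier in the while loop, $X^{t-1}$ itself has been frozen since the previous \texttt{MakeMatch}, so the inductive hypothesis still applies verbatim to it.

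Next, plug in the two bounds we have: by hypothesis $v_i(X_j^{t-1}) - v_i(X_i^{t-1}) \le 1-a$, and by the while-loop condition the left-hand side above is strictly greater than $1-a$. Subtracting, we get $v_i^t(\ell) - v_i^t(k) > 0$. Since valuations are binary with values in $\{a,1\}$, the only way to have a strictly positive difference is $v_i^t(\ell) = 1$ and $v_i^t(k) = a$, establishing the first two claims.

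Finally, substitute $v_i^t(\ell) - v_i^t(k) = 1-a$ back into the decomposition and use the strict inequality from the while-loop condition to conclude $v_i(X_j^{t-1}) - v_i(X_i^{t-1}) > 0$, i.e.\ $v_i(X_i^{t-1}) < v_i(X_j^{t-1})$, which is the third claim. There is no real obstacle here; the only thing to be careful about is the bookkeeping that $X^{t-1}$ (not the intermediate $X^t$) is the object that the inductive $(1-a)$-envy-boundedness is applied to, so that the calculation is truly routine once that is pinned down.
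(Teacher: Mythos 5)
Your proof is correct and follows essentially the same route as the paper's: decompose the tentative bundles as $X_i^t = X_i^{t-1}\cup\{k\}$ and $X_j^t = X_j^{t-1}\cup\{\ell\}$, subtract the $(1-a)$-envy-boundedness of $X^{t-1}$ from the while-loop condition to force $v_i^t(\ell)=1$ and $v_i^t(k)=a$ via binariness, then substitute back to get the strict envy under $X^{t-1}$. No gaps; the remark that $X^{t-1}$ is frozen while intermediate swaps only touch $x^t$ is exactly the right bookkeeping point.
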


\begin{proof}
Before the swap, we have $X_i^t = X_i^{t-1} \cup \{k\}$ and $X_j^t = X_j^{t-1} \cup \{\ell\}$. By construction, we know that before the swap, $v_i(X_j^t) - v_i(X_i^t) > 1-a$. Since we assumed that $X^{t-1}$ is $(1-a)$-envy-bounded, we have $v_i(X_j^t\setminus\{\ell\}) - v_i(X_i^t\setminus\{k\}) \le 1-a$. Because valuations are additive, we have
\begin{align*}
1-a \ge&\ v_i(X_j^t\setminus\{\ell\}) - v_i(X_i^t\setminus\{k\})\\
=&\ v_i(X_j^t) - v_i(X_i^t) + v_i^t(k) - v_i^t(\ell)\\
>&\ 1-a + v_i^t(k) - v_i^t(\ell)\\
\end{align*}
Therefore $v_i^t(\ell) > v_i^t(k)$. Since valuations are binary, this implies that $v_i^t(\ell) = 1$ and $v_i^t(k) = a$. This also gives us $v_i(X_j^t\setminus\{\ell\}) - v_i(X_i^t\setminus\{k\}) = v_i(X^{t-1}_j) - v_i(X^{t-1}_i) > 0$, as required.
\end{proof}

\begin{lemma}\label{lem:e-one-round}
Suppose that $X^{t-1}$ is $(1-a)$-envy-bounded. Then on each iteration of the while loop during time step $t$, $|E_t(x^t)|$ does not decrease.
\end{lemma}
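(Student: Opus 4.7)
The plan is to analyze exactly how the swap changes the set $E_t(x^t)$ of good edges. A single iteration of the while loop replaces the two edges $\{i,k\}$ and $\{j,\ell\}$ with the edges $\{i,\ell\}$ and $\{j,k\}$; every other edge of the perfect matching $x^t$ is untouched. Hence $|E_t(x^t)|$ changes by exactly
\[
\Delta = \mathbf{1}[\{i,\ell\}\in E_t(x^t)] + \mathbf{1}[\{j,k\}\in E_t(x^t)] - \mathbf{1}[\{i,k\}\in E_t(x^t)] - \mathbf{1}[\{j,\ell\}\in E_t(x^t)],
\]
and I need to show $\Delta \ge 0$.

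The first step is to invoke Lemma~\ref{lem:swap-property}, which (using the hypothesis that $X^{t-1}$ is $(1-a)$-envy-bounded) guarantees $v_i^t(k)=a$ and $v_i^t(\ell)=1$. Combined with symmetry of the valuation profile ($v_k^t(i)=v_i^t(k)=a$ and $v_\ell^t(i)=v_i^t(\ell)=1$), this immediately pins down two of the four indicators: $\{i,k\}\notin E_t(x^t)$ (the edge pre-swap is a bad match), and $\{i,\ell\}\in E_t(x^t)$ (the edge post-swap is a good match).

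Substituting these into the expression for $\Delta$ yields
\[
\Delta \;=\; 1 \;+\; \mathbf{1}[\{j,k\}\in E_t(x^t)] \;-\; \mathbf{1}[\{j,\ell\}\in E_t(x^t)] \;\ge\; 1 - 1 \;=\; 0,
\]
which is the desired conclusion. The argument is essentially a bookkeeping computation; the only non-trivial ingredient is Lemma~\ref{lem:swap-property}, so no real obstacle arises here. The one subtlety worth flagging is that symmetry is used precisely at the step where $v_i^t(\ell)=1$ is upgraded to both endpoints having value $1$ — this is why the lemma would fail without the symmetry hypothesis.
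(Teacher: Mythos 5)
Your proof is correct and follows essentially the same argument as the paper: use Lemma~\ref{lem:swap-property} to conclude that $\{i,k\}$ is a bad edge and $\{i,\ell\}$ is a good edge, so the four affected agents contribute at most one good edge before the swap and at least one after. Your explicit indicator bookkeeping and your note that symmetry is needed to upgrade $v_i^t(\ell)=1$ to $\{i,\ell\}\in E_t(x^t)$ are just slightly more careful renderings of the same reasoning.
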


\begin{proof}
 Let $i,j$ be as defined in Algorithm~\ref{alg:sym-bin}, and let $\{i,k\}$ and $\{j,\ell\}$ be the matches before the swap. Since all agents outside of $\{i,j,k,\ell\}$ are unaffected, we need only consider those four agents. 
 
 By Lemma~\ref{lem:swap-property}, $v_i^t(\ell) = 1$ and $v_i^t(k) = a$. Thus the edge $\{i,k\}$ cannot be in $E_t(x^t)$, so the set $\{i,j,k,\ell\}$ contributes at most 1 edge to $E$ (both before and after the swap).

Now consider $E_t(x^t)$ after $i$ and $j$ swap matches. Since $v_i^t(\ell) = 1$, we have $\{i,\ell\} \in E_t(x^t)$. Thus after the swap, the set $\{i,j,k,\ell\}$ contributes at least one edge to $E_t(x^t)$. Since the matches of agents outside of $\{i,j,k,\ell\}$ are unchanged, it follows that the size of $E_t(x^t)$ has not decreased.
\end{proof}

We are now ready to show that $x^t$ remains a maximum weight matching throughout the duration of the while loop. Note that we have not yet shown that the while loop is guaranteed to terminate; we will do that next.

\begin{lemma}\label{lem:max-matching}
Suppose that $X^{t-1}$ is $(1-a)$-envy-bounded. Then after every iteration of the while loop, $x^t$ is a maximum weight matching.
\end{lemma}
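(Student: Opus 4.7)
The plan is a straightforward induction on the iteration count of the while loop, using Lemma~\ref{lem:e-one-round} as the engine and equation~\eqref{eq:weight} as the bookkeeping tool. The base case is immediate: before the while loop executes, $x^t$ is set by $\texttt{MaxWeightMatching}(N, M, \V)$, which by definition returns a maximum weight matching for the current valuations.

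For the inductive step, assume $x^t$ is a maximum weight matching at the start of some iteration. Since we are in the EF1-over-rounds setting, $x^t$ is a perfect matching throughout, so $|x^t| = n$ is invariant under the swap. By equation~\eqref{eq:weight},
\[
W^t(x^t) = |E_t(x^t)| + a(|x^t| - |E_t(x^t)|) = (1-a)\,|E_t(x^t)| + a n,
\]
so $W^t(x^t)$ is an increasing affine function of $|E_t(x^t)|$ (recall $a \in [0,1)$, so $1-a > 0$). By Lemma~\ref{lem:e-one-round}, $|E_t(x^t)|$ does not decrease during the iteration, and hence neither does $W^t(x^t)$. But $x^t$ was already a maximum weight matching by the inductive hypothesis, so its weight cannot strictly increase; therefore the weight remains exactly at the maximum, and $x^t$ is still a maximum weight matching after the swap.

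There is no real obstacle here, since the heavy lifting has already been done in Lemma~\ref{lem:e-one-round} (which in turn relies on Lemma~\ref{lem:swap-property} and on symmetry of valuations to show that both the edge dropped from $E_t(x^t)$ and the edge added to it behave as expected). The only subtle point worth flagging explicitly is that this lemma does not assert that the while loop terminates; it only states that \emph{whenever} an iteration completes, the updated $x^t$ is a maximum weight matching. Termination, which is needed before we can talk about the final $x^t$ produced by the algorithm, will be established separately in Section~\ref{sec:termination}.
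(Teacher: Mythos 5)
Your proof is correct and follows essentially the same route as the paper: induction on the while-loop iterations, with the base case from \texttt{MaxWeightMatching}, Lemma~\ref{lem:e-one-round} showing $|E_t(x^t)|$ does not decrease, and Equation~\ref{eq:weight} converting that into non-decrease of $W^t(x^t)$, which combined with maximality gives equality. Your additional remarks (that $1-a>0$ makes the weight increasing in $|E_t(x^t)|$, and that termination is deferred) are accurate and match the paper's own framing.
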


\begin{proof}
We proceed by induction. Before we enter the while loop, $x^t$ is a maximum weight matching by construction. For the inductive step, consider an arbitrary iteration of the while loop, and assume that $x^t$ has maximum weight at the beginning of this iteration. Lemma~\ref{lem:e-one-round} implies that $|E_t(x^t)|$ does not decrease during this iteration, so by Equation~\ref{eq:weight}, $W^t(x^t)$ does not decrease. Since $x^t$ had maximum weight before the iteration, $W^t(x^t)$ cannot increase. We conclude that $x^t$ has maximum weight at the end of this iteration.
\end{proof}

Once we show that the while loop for each time step is guaranteed to terminate, Lemma~\ref{lem:max-matching} will imply that we end up with a maximum weight matching for each time step $t$.

\subsection{The while loop terminates}\label{sec:termination}

\begin{lemma}\label{lem:termination}
Suppose that $X^{t-1}$ is $(1-a)$-envy-bounded and has no envy cycles. Then the while loop on time step $t$ has at most $2n^2$ iterations before terminating.
\end{lemma}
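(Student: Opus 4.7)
The plan is to define a weighted utility potential on each side of the market whose value strictly grows with each iteration of the while loop, and whose total range during the loop is $O(n^2)$. Since $X^{t-1}$ has no envy cycles, the envy DAG restricted to $N$ (and to $M$) admits a topological order: fix bijections $\pi_N:N\to\{1,\dots,n\}$ and $\pi_M:M\to\{1,\dots,n\}$ with $\pi(i)<\pi(j)$ whenever $i$ envies $j$ in $X^{t-1}$. Assign weights $w(i)=n-\pi(i)+1\in\{1,\dots,n\}$, so that an envying agent always has strictly larger weight than the agent she envies. Define
\[
\Phi_N = \sum_{i\in N} v_i(X_i^t)\, w(i), \qquad \Phi_M = \sum_{k\in M} v_k(X_k^t)\, w(k), \qquad \Phi = \Phi_N + \Phi_M.
\]

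Next, I would show $\Delta\Phi\ge 1-a$ per iteration. Consider an $N$-swap between $i,j$ with current matches $k=x_i^t$ and $\ell=x_j^t$ (the $M$-case is symmetric). By Lemma~\ref{lem:swap-property}, (i) $v_i(X_i^{t-1})<v_i(X_j^{t-1})$, so $i$ envies $j$ in $X^{t-1}$ and therefore $w(i)-w(j)\ge 1$; and (ii) $v_i^t(k)=a$, $v_i^t(\ell)=1$. Hence $i$'s contribution to $\Phi_N$ grows by exactly $(1-a)w(i)$, while $j$'s falls by at most $(1-a)w(j)$ since $v_j^t(\cdot)\in\{a,1\}$; this gives
\[
\Delta\Phi_N \;\ge\; (1-a)\bigl(w(i)-w(j)\bigr) \;\ge\; 1-a.
\]
For the opposite side, symmetry yields $v_k^t(i)=a$ and $v_\ell^t(i)=1$; after the swap, $k$'s match value moves from $a$ to $v_k^t(j)\ge a$ and $\ell$'s moves from $v_\ell^t(j)\le 1$ to $1$, so both contributions to $\Phi_M$ are non-decreasing, whence $\Delta\Phi_M\ge 0$. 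Combining, $\Delta\Phi\ge 1-a>0$.

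Finally, bound the total variation of $\Phi$. Throughout the while loop at time $t$, the quantity $v_i(X_i^{t-1})$ is frozen, so only the summand $v_i^t(x_i^t)\,w(i)\in\{a\cdot w(i),\,1\cdot w(i)\}$ can change, contributing at most $(1-a)w(i)$ to the range of $\Phi_N$. Summing both sides,
\[
\max\Phi - \min\Phi \;\le\; (1-a)\!\sum_{i\in N} w(i) + (1-a)\!\sum_{k\in M} w(k) \;=\; (1-a)\,n(n+1).
\]
Since each iteration grows $\Phi$ by at least $1-a$, the while loop terminates after at most $n(n+1)\le 2n^2$ iterations.

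The main obstacle is verifying the non-decrease of the opposite side's potential (step two): establishing $\Delta\Phi_M\ge 0$ for $N$-swaps is precisely where symmetry of valuations is used, letting the bounds $v_i^t(k)=a$ and $v_i^t(\ell)=1$ from Lemma~\ref{lem:swap-property} transfer to $v_k^t(i)=a$ and $v_\ell^t(i)=1$. Separately, the strict inequality $w(i)-w(j)\ge 1$ depends critically on the absence of envy cycles in $X^{t-1}$, which lets $\pi_N$ be a bona fide total order with distinct integer values.
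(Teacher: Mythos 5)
Your proof is correct, and it takes a genuinely different route from the paper's. The paper argues combinatorially: using Lemma~\ref{lem:swap-property} together with the max-weight invariant of Lemma~\ref{lem:max-matching}, it shows that each iteration relocates exactly one ``good'' edge of $E_t(x^t)$ from endpoint $j$ to endpoint $i$, where $i$ envies $j$ under $X^{t-1}$; since that envy graph is acyclic, each of the at most $n$ good edges can have its $N$-endpoint (resp.\ $M$-endpoint) moved at most $n$ times, giving $2n^2$. You instead build a numeric potential weighted by a topological order of the envy DAG of $X^{t-1}$, show it increases by at least $1-a$ per iteration (Lemma~\ref{lem:swap-property} for the swapping side, symmetry for the opposite side), and bound its total variation by $(1-a)\,n(n+1)$. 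Your version buys independence from Lemma~\ref{lem:max-matching} --- the non-decrease of $\Phi_M$ follows directly from symmetry, so termination is decoupled from the max-weight analysis --- and a single potential absorbs arbitrary interleaving of $N$-side and $M$-side swaps with no separate count per side; it also gives the marginally sharper bound $n(n+1)$. The paper's version, in exchange, reuses the edge-tracking machinery ($E_t(x^t)$ and Lemma~\ref{lem:e-one-round}) already needed for the max-weight claim, so it introduces no new objects. Both proofs ultimately rest on the same two pillars: Lemma~\ref{lem:swap-property} and the acyclicity of the envy graph of $X^{t-1}$, which is exactly what makes your weights $w(i)-w(j)\ge 1$ well defined.
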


\begin{proof}
Consider a given iteration of the while loop on time step $t$. As before, let $i,j$ be as defined in Algorithm~\ref{alg:sym-bin}, and let $\{i,k\}$ and $\{j,\ell\}$ be the matches before the swap. 

This iteration changes the matches of exactly four agents: $i,j,k,$ and $\ell$. By Lemma~\ref{lem:swap-property}, $v_i^t(\ell) = 1$ and $v_i^t(k) = a$. Thus $\{i,k\}$ cannot be in $E_t(x^t)$ before the swap, and we know that after the swap, we have $\{i,\ell\} \in E_t(x^t)$. Furthermore, since $|E_t(x^t)|$ cannot change (that would change the weight of the matching, which would contradict  Lemma~\ref{lem:max-matching}), we know that $\{j,\ell\} \in E_t(x^t)$ before the swap, and $\{j,k\} \not\in E_t(x^t)$ after the swap.

We know that $i$ envies $j$ under $X^{t-1}$. Thus on each iteration of the while loop, a single edge $\{j,\ell\} \in E_t(x^t)$ changes to $\{i,\ell\}$ such that $i$ envies $j$ under $X^{t-1}$, and the rest of $E_t(x^t)$ is unchanged. Without loss of generality, assume $i,j \in N$; then the endpoint of the edge in $N$ has changed, but the endpoint of the edge in $M$ (i.e., $\ell$) has not changed.

We claim that there are at most $n^2$ iterations where an edge's endpoint in $N$ changes. Suppose not: since there are at most $n$ edges in $E_t(x^t)$, one of these edges must have changed its endpoint in $N$ more than $n$ times. Thus the endpoint must have cycled, which contradicts our assumption that $X^{t-1}$ has no envy cycles. Thus there are at most $n^2$ iterations where an edge's endpoint in $N$ changes, and by symmetry, there are at most $n^2$ where an edge's endpoint in $M$ changes. Since an edge changes on every iteration of the while loop, we conclude that there are at most $2n^2$ iterations of the while loop.
\end{proof}

\subsection{We never create an envy cycle}\label{sec:envy-cycles}

Given a cycle $C = (1,2\dots |C|)$ and an agent $i \in C$, let $\suc(i) = i\bmod|C| + 1$. Recall that for each agent $i$, $x^t_i$ denotes the unique agent $j$ to whom $i$ is matched during time step $t$.

\begin{lemma}\label{lem:ec-property}
Fix a time step $t$ and assume that $X^{t-1}$ has no envy cycles. Suppose that $X^t$ is $(1-a)$-envy-bounded, but has an envy cycle $C$. Then for all $i \in C$, $v_i^t(x_{\suc(i)}^t) \ge v_i^t(x_i^t)$, and there exists $i \in C$ for whom the inequality is strict.
\end{lemma}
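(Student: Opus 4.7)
The plan is to use additivity to relate envy under $X^t$ to envy under $X^{t-1}$, and then invoke the two hypotheses---no envy cycles in $X^{t-1}$ and $(1-a)$-envy-boundedness of $X^t$ (together with the ambient inductive hypothesis of Lemma~\ref{lem:full-stage}, which also gives $(1-a)$-envy-boundedness of $X^{t-1}$)---one for each of the two claims. The key identity I would write down is that for every $i \in C$,
\[
v_i(X^t_{\suc(i)}) - v_i(X^t_i) = \bigl(v_i(X^{t-1}_{\suc(i)}) - v_i(X^{t-1}_i)\bigr) + \bigl(v_i^t(x^t_{\suc(i)}) - v_i^t(x^t_i)\bigr),
\]
and the left-hand side is strictly positive because $i$ envies $\suc(i)$ under $X^t$.

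For the strict inequality, I would use the no-envy-cycles hypothesis on $X^{t-1}$: since $C$ is a directed cycle in the envy graph of $X^t$ but the envy graph of $X^{t-1}$ is acyclic, not every edge of $C$ can already be present in $X^{t-1}$. Hence there exists some $i \in C$ with $v_i(X^{t-1}_{\suc(i)}) \le v_i(X^{t-1}_i)$, i.e., the first summand on the right of the identity is nonpositive. Substituting into the identity yields $v_i^t(x^t_{\suc(i)}) - v_i^t(x^t_i) > 0$ for that particular $i$, which is the strict inequality we need.

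For the weak inequality I would argue by contradiction. Suppose some $i \in C$ had $v_i^t(x^t_{\suc(i)}) < v_i^t(x^t_i)$. Binary valuations force $v_i^t(x^t_{\suc(i)}) = a$ and $v_i^t(x^t_i) = 1$, so the second summand in the identity equals exactly $-(1-a)$. Combined with the left-hand side being positive, the identity forces $v_i(X^{t-1}_{\suc(i)}) - v_i(X^{t-1}_i) > 1-a$, contradicting $(1-a)$-envy-boundedness of $X^{t-1}$.

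The main subtlety is the weak inequality: it is the only place that jointly uses the binary-valuation structure (to convert a strict drop in single-match value into a gap of at least $1-a$) and the $(1-a)$-envy-boundedness of the previous cumulative matching (inherited through the outer induction of Lemma~\ref{lem:full-stage}). The strict inequality, by contrast, is essentially immediate from the additive identity together with the presence of a ``missing'' edge of $C$ in $X^{t-1}$'s envy graph, and uses neither binary valuations nor any envy-boundedness assumption.
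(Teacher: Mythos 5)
Your proof is correct and follows essentially the same route as the paper's: additivity splits the cumulative envy $v_i(X^t_{\suc(i)})-v_i(X^t_i)$ into the $X^{t-1}$ part plus the single-match part, acyclicity of $X^{t-1}$'s envy graph supplies at least one ``new'' edge for the strict inequality, and the binary structure plus envy-boundedness rules out a strict drop for the weak inequality. One remark: you are right that the weak-inequality step actually requires $(1-a)$-envy-boundedness of $X^{t-1}$ (inherited from the outer induction of Lemma~\ref{lem:full-stage}) and not just of $X^t$; the paper's displayed inequality uses exactly that bound on $X^{t-1}$ while nominally citing the hypothesis on $X^t$, so your bookkeeping of where each hypothesis is used is, if anything, more precise than the paper's.
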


\begin{proof}
Let $C = (1, 2, \dots |C|)$ be an envy cycle in $X^t$, i.e., agent $i$ envies agent $\suc(i)$ under $X^t$ for all $1 \le i \le |C|$. Formally, $v_i(X_i^t) < v_i\big(X_{\suc(i)}^t\big)$ for all $i \in C$. Since the envy graph of $X^{t-1}$ had no cycles, at least one of these edges must be new: there must exist $i$ such that $v_i(X_i^{t-1}) \ge v_i\big(X_{\suc(i)}^{t-1}\big)$. Since $v_i$ is additive, this implies that $v_i^t(x_{\suc(i)}^t) > v_i^t(x_i^t)$.

It remains to show that $v_i^t(x_{\suc(i)}^t) \ge v_i^t(x_i^t)$ for all $i \in C$. Suppose there exists $i \in C$ such that $v_i^t(x_i^t) > v_i^t(x_{\suc(i)}^t)$. Since $v_i$ is binary by assumption, we have $v_i^t(x_i^t) - v_i^t(x_{\suc(i)}^t) = 1-a$. Also, since $X^t$ is $(1-a)$-envy-bounded, we have $v_i(X_{\suc(i)}^t) - v_i(X_i^t) \le 1-a$. Therefore
\begin{align*}
v_i(X_i^t) - v_i(X_{\suc(i)}^t) =&\ v_i(X_i^{t-1}) - v_i(X_{\suc(i)}^{t-1}) + v_i^t(x_i^t) - v_i^t(x_{\suc(i)}^t)\\
\ge&\ (a-1) + (1-a)\\
\ge& 0
\end{align*}
But this implies that $j$ does not envy $\suc(i)$ in $X^t$, which contradicts $C$ being an envy cycle. Therefore $v_i^t(x_{\suc(i)}^t) \ge v_i^t(x_i^t)$ for all $i \in C$, and the inequality is strict for at least one $i \in C$.
\end{proof}

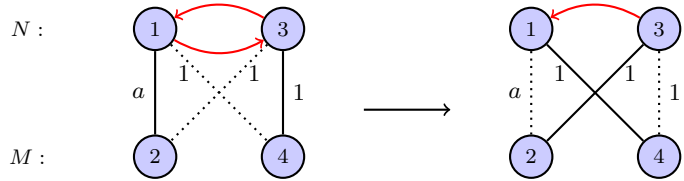
\begin{figure}[ht!bp]
\centering
\begin{tikzpicture}[thick,every node/.style={draw,circle,minimum size=0 mm, node distance=1.7 cm, font=\footnotesize,fill=blue!20}]
  \node (1) at (0,0) {$1$};
  \node (3) [right of=1] {$3$};
  \node (2) [below of=1] {$2$};
  \node (4) [right of=2] {$4$};
  \node (5) at (5,0) {$1$};
  \node (7) [right of=5] {$3$}; 
  \node (6) [below of=5] {$2$}; 
  \node (8) [right of=6] {$4$};
  \node[draw=none,fill=none] (9) [below right of=3] {};
  \node[draw=none,fill=none] (10) [below left of=5] {};
  \node[draw=none,fill=none] (N) [left of=1] {$N:$};
  \node[draw=none,fill=none] (M) [left of=2] {$M:$};
  
  \draw[->] (9.north west) -- (10.north east);
  
\path[->,color=red]
	(1) edge[bend right] (3)
	(3) edge[bend right] (1)
	(7) edge[bend right] (5)
	;
  
\path[every node/.style={font=\small}]
	(1) edge node[left] {$a$} (2)
	(3) edge node[right] {$1$} (4)
	(5) edge node[left, pos=.3] {$1$} (8)
	(7) edge node[right, pos=.3] {$1$} (6)		
	;
	
\path[dotted, every node/.style={font=\small}]
	(1) edge node[left, pos=.3] {$1$} (4)
	(3) edge node[right, pos=.3] {$1$} (2)
	(7) edge node[right] {$1$} (8)	
	(5) edge node[left] {$a$} (6)
	;
  
\end{tikzpicture}
\caption{An example of how the existence of an envy cycle implies a higher weight matching. The figure shows an instance with $N = \{1,3\}$ and $M = \{2,4\}$, where the edge between $i$ and $j$ is labeled with the value of $v_i(j)$ (which is equal to $v_j(i)$, by symmetry). Directed red edges denote envy. On the left hand side, we see a possible perfect matching $x^t$ (solid black edges denote edges in the matching) which could result in an envy cycle between agents 1 and 3. As per Lemma~\ref{lem:ec-property}, agent 1 strictly prefers agent 3's match to her own, and agent 3 weakly prefers agent 1's match to her own. The right hand side shows a different matching with higher weight, obtained by permuting the matches of agents 1 and 3.}
\label{fig:envy-cycle}
\end{figure}

\begin{lemma}\label{lem:no-ec}
Fix a time step $t$. Suppose that $X^{t-1}$ has no envy cycles, and $X^t$ is $(1-a)$-envy-bounded. Then $X^t$ has no envy cycles.
\end{lemma}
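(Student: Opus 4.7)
The plan is to argue by contradiction, exploiting the fact that $x^t$ has maximum weight (Lemma~\ref{lem:max-matching}) together with the structural conclusion of Lemma~\ref{lem:ec-property}. The key idea is that an envy cycle lets us explicitly construct a strictly better matching by rotating current-round matches around the cycle, which is impossible.

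Suppose for contradiction that $X^t$ has an envy cycle $C$. Since envy is only defined within a side of the market, WLOG $C \subseteq N$. Applying Lemma~\ref{lem:ec-property} (whose hypotheses are exactly the hypotheses of the present lemma together with the assumption that an envy cycle exists), I get
\[
v_i^t(x_{\suc(i)}^t) \ge v_i^t(x_i^t) \quad \text{for all } i \in C,
\]
with strict inequality for at least one $i \in C$. This is the crucial ``every agent weakly prefers the successor's match'' property visualized in Figure~\ref{fig:envy-cycle}.

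Next, I would define a new perfect matching $x'$ that rotates the current-round matches around $C$: set $x'_i = x_{\suc(i)}^t$ for each $i \in C$, and $x'_i = x_i^t$ for all other agents. This is a valid perfect matching since the agents $\{x_i^t : i \in C\} \subseteq M$ are simply being permuted among the agents of $C$; all matches outside of $C$ are untouched. Comparing weights, every edge not involving an agent of $C$ contributes the same to $W^t(x')$ and $W^t(x^t)$. For the edges incident to $C$, symmetry of valuations gives $v_i^t(j) + v_j^t(i) = 2 v_i^t(j)$, so
\[
W^t(x') - W^t(x^t) = \sum_{i \in C} \bigl(v_i^t(x_{\suc(i)}^t) - v_i^t(x_i^t)\bigr) > 0,
\]
where the strict inequality uses Lemma~\ref{lem:ec-property}. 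This contradicts Lemma~\ref{lem:max-matching}, which states that $x^t$ is a maximum weight matching (the hypotheses of Lemma~\ref{lem:max-matching} follow from our assumption that $X^{t-1}$ is $(1-a)$-envy-bounded, which holds because $X^{t-1}$ having no envy cycles and the while loop terminating for time step $t-1$ gives us a $(1-a)$-envy-bounded $X^{t-1}$ — or more cleanly, this will follow as part of the simultaneous induction described in Section~\ref{sec:roadmap}).

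The main obstacle is really just packaging the argument cleanly; the substantive content has already been done in Lemma~\ref{lem:ec-property}. I need to be careful about two things: (i) justifying that symmetry lets me halve the weight computation down to $\sum_{i \in C} v_i^t(x_{\suc(i)}^t)$ rather than dealing with both directions of each edge, and (ii) ensuring I invoke Lemma~\ref{lem:max-matching} with the correct hypothesis, namely that $X^{t-1}$ is $(1-a)$-envy-bounded, which in the final Theorem~\ref{thm:sym-bin} proof will be maintained inductively alongside the no-envy-cycles property.
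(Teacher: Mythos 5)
Your proposal is correct and follows essentially the same route as the paper: invoke Lemma~\ref{lem:ec-property}, rotate the current-round matches around the envy cycle to build a strictly heavier perfect matching, and derive a contradiction with Lemma~\ref{lem:max-matching} (the paper computes the weight gain by summing agent utilities over $N$ and $M$ separately and using symmetry, whereas you halve each edge's contribution directly via $v_i^t(j)+v_j^t(i)=2v_i^t(j)$, which is equivalent). Your parenthetical about supplying the $(1-a)$-envy-boundedness hypothesis of Lemma~\ref{lem:max-matching} via the simultaneous induction is the right resolution of a gap the paper's own proof glosses over.
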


\begin{proof}
Lemma~\ref{lem:max-matching} implies that the final $x^t$ is a maximum weight matching. We will show that the existence of an envy cycle in $X^t$ would imply the existence of a matching with higher weight, which is impossible.

Let $C = (1,2\dots |C|)$ be an envy cycle in $X^t$. We define a new perfect matching $y$ which will have higher weight than $x^t$. Define $y$ to be identical to $x^t$, except that we permute the matches of agents in $C$. Let $\suc(i)$ and $\pred(i)$ denote $j$'s successor and predecessor with respect to $C$; then
\[
y_i = \begin{cases}
x_{\suc(i)}^t & \text{if } i \in C\\
x_i^t & \text{if } i\not \in C
\end{cases}
\]
where $y_i$ is the match agent $i$ receives in the matching $y$.

By Lemma~\ref{lem:ec-property}, $v_i^t(x_{\suc(i)}^t) \ge v_i^t(x_i^t)$ for all $i \in C$, and there exists $i \in C$ for whom the inequality is strict. Thus $v_i^t(y_i^t) \ge v_i^t(x_i^t)$ for all $i \in C$, and there exists $i \in C$ for whom the inequality is strict. For all $i\not \in C$, $v_i^t(y_i) = v_i^t(x^t_i)$. Without loss of generality, assume $C \subseteq N$; then we have $\sum_{i \in N} v_i^t(y_i) > \sum_{i \in N} v_i^t(x_i^t)$. By symmetry of valuations, for any matching $x$, we have $\sum_{i \in M} v_i^t(x_i) = \sum_{i \in N} v_i^t(x_i)$. Therefore $\sum_{i \in N\cup M} v_i^t(y_i) > \sum_{i \in N\cup M} v_i^t(x_i^t)$, which is equivalent to $\sum_{\{i,j\} \in y} (v_i^t(j) + v_j^t(i)) > \sum_{\{i,j\} \in x^t} (v_i^t(j) + v_j^t(i))$>

Therefore $W^t(y) > W^t(x^t)$, which contradicts the fact that $x^t$ is a maximum weight matching. We conclude that $X^t$ has no envy cycles.
\end{proof}

\subsection{Putting it all together}\label{sec:sym-bin-final}

\lemFullStage*

\begin{proof}
First, since $X^{t-1}$ is $(1-a)$-envy-bounded and has no cycles, Lemma~\ref{lem:termination} implies that the while loop for time step $t$ terminates. Thus $X^t$ is well-defined. Also, Lemma~\ref{lem:max-matching} immediately implies that the matching $x^t$ chosen is a maximum weight matching. Next, $X^t$ must be $(1-a)$-envy-bounded, or we would not have exited the while loop for time step $t$. Finally, Lemma~\ref{lem:no-ec} implies that $X^t$ has no envy cycles.
\end{proof}

\thmSymBin*

\begin{proof}
By induction, Lemma~\ref{lem:full-stage} implies that for every time step $t$, $X^t$ is $(1-a)$-envy-bounded and has no envy cycles, and that $x^t$ is a maximum weight matching. Thus by Lemma~\ref{lem:envy-bounded}, $X^t$ is EF1 for all times $t$. Thus Algorithm~\ref{alg:sym-bin} is EF1-over-rounds.

For the time bound, Lemma~\ref{lem:termination} tells us that the while loop has $O(n^2)$ iterations, and each iteration is constant time. Since our graph is bipartite, finding the initial maximum weight matching can be done in $O(n^{2.5})$ time~\cite{Micali1980}. Therefore Algorithm~\ref{alg:sym-bin} runs in time $O(n^{2.5} + n^2) = O(n^{2.5})$ for each time step.
\end{proof}

\section{Counterexamples}\label{sec:counter}

A natural question is whether Theorem~\ref{thm:sym-bin} can be extended to all dynamic binary valuations (i.e., not necessarily symmetric). The answer is unfortunately no, which we show in two different ways. First, for dynamic binary valuations, EF1-over-rounds alone is impossible (Theorem~\ref{thm:dyn-bin-counter}). Second, for non-dynamic binary valuations, it is impossible to guarantee both EF1-over-rounds and maximum weight matching for each time step (Theorem~\ref{thm:bin-max-counter}).

\begin{figure}[ht!bp]
\centering
\begin{tikzpicture}[->,thick,every node/.style={draw,circle,minimum size=0 mm, node distance=1.5 cm, font=\footnotesize, fill=blue!20}]
  \node (1) at (0,0) {$1$};
  \node (3) [right of=1] {$3$};
  \node (2) [below of=1] {$2$};
  \node (4) [right of=2] {$4$};
  \node (5) at (5,0) {$1$};
  \node (7) [right of=5] {$3$}; 
  \node (6) [below of=5] {$2$}; 
  \node (8) [right of=6] {$4$};
  \node[draw=none,fill=none] (t1) at (.7, 1) {$t = 1$};
  \node[draw=none,fill=none] (t2) at (5.7, 1) {$t = 2$};
  
  \foreach \i [remember=\i as \lasti (initially 1)] in {2,3,4,1} {\draw (\lasti) -- (\i);};\
\draw[<->] (5) -- (6);
\draw (7) -- (6);
\draw (8) -- (5);
  
\end{tikzpicture}
\caption{An instance with dynamic and binary valuations where EF1-over-rounds is impossible.}\label{fig:dyn-bin-counter}
\end{figure}
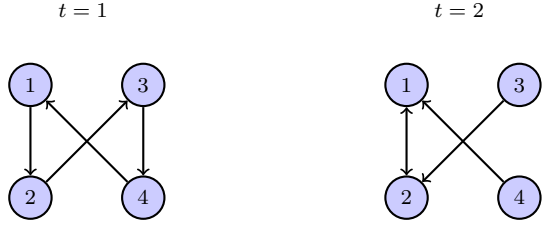

 \begin{theorem}\label{thm:dyn-bin-counter}
For dynamic and binary valuations, there is no algorithm which is EF1-over-rounds.
 \end{theorem}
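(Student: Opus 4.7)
The plan is a direct counterexample using the instance pictured in Figure~\ref{fig:dyn-bin-counter}: $N = \{1,3\}$, $M = \{2,4\}$, with binary $\{0,1\}$ dynamic valuations (i.e., $a = 0$). At $t=1$ the only nonzero cross-side valuations are $v_1^1(2) = v_2^1(3) = v_3^1(4) = v_4^1(1) = 1$, forming a directed $4$-cycle of ``likes''. At $t=2$ the only nonzero cross-side valuations are $v_1^2(2) = v_2^2(1) = v_3^2(2) = v_4^2(1) = 1$. Since $|N| = |M| = 2$, the only perfect matchings available are $\mu_1 = \{\{1,2\},\{3,4\}\}$ and $\mu_2 = \{\{1,4\},\{3,2\}\}$, so any algorithm's choice of $(x^1, x^2)$ must fall into one of four possibilities, which I would enumerate exhaustively.

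For each of the four cases I would compute the cumulative bundles and exhibit an EF1 violation at $t = 2$. The pattern is uniform across the cases: whichever matching is picked at $t=1$ forces two agents on one side of the market to be matched to partners they do not value under the $t=1$ preferences, and whichever matching is picked at $t=2$ then re-matches one of those partners to the same agent they were matched to before, under the $t=2$ preferences that now value that pairing. Concretely, $(x^1, x^2) = (\mu_1, \mu_1)$ makes agent $4$ envy agent $2$; $(\mu_1, \mu_2)$ makes agent $2$ envy agent $4$; $(\mu_2, \mu_1)$ makes agent $3$ envy agent $1$; and $(\mu_2, \mu_2)$ makes agent $1$ envy agent $3$. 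In each case one verifies that $v_i(X_i^2) = 0$ while $v_i(X_j^2) = 2$, with each of the two occurrences in $X_j^2$ contributing value $1$ to the envying agent $i$.

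The only subtlety is the multiset semantics of EF1 under dynamic valuations: removing a single $k$ from the multiset $X_j^2$ removes one occurrence, whose contribution to $v_i$ depends on the time step at which it was added. In all four cases both occurrences in the envied bundle contribute $1$ to $v_i$, so the choice of which occurrence to remove is immaterial, and $v_i(X_j^2 \setminus \{k\}) = 1 > 0 = v_i(X_i^2)$ in every instantiation. Since every algorithm must commit to one of the four two-round trajectories above and all four violate EF1 at $t=2$, no algorithm can be EF1-over-rounds on this instance, which proves Theorem~\ref{thm:dyn-bin-counter}. The main ``obstacle'' is essentially just bookkeeping; the conceptual content of the proof is that the first-round matching already forces some agent to accumulate matches she dislikes, and the adversary uses the second round to push a matching symmetry-breakingly toward the \emph{other} agent in a way that cannot be repaired by removing a single match.
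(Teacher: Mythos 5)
Your proposal is correct and is essentially the paper's own proof: the same two-round instance from Figure~\ref{fig:dyn-bin-counter}, with the observation that the round-1 matching forces mutual envy on one side and the round-2 valuations make both envious agents compete for the same partner, yielding an envy gap of $2$ that one removal cannot repair. The only difference is that you enumerate all four trajectories explicitly where the paper invokes a (slightly loose) symmetry reduction to one case; your version is, if anything, more airtight.
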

 
 \begin{proof}
 Consider the instance given by Figure~\ref{fig:dyn-bin-counter}. We have $N = \{1,3\}$ and $M = \{2,4\}$, and an edge from agent $i$ to agent $j$ indicates $v_i(j) = 1$. There are only two perfect matchings for us to choose from: $\{1,2\}$ and $\{3,4\}$, or $\{1,4\}$ and $\{3,2\}$. For $t=1$, by symmetry, suppose we match $\{1,2\}$ and $\{3,4\}$. Then under $X^1$ (i.e., after these matches), agent $2$ envies agent $4$ and vice versa. For $t=2$, agents 2 and 4 both want to be matched with agent 1, and exactly one of them will be. Thus whichever of them is not matched to agent 1 on $t=2$ will envy the other in violation of EF1. This assumes a sufficient choice of $a$ for $v_i(j) \in \{a,1 \}$; $a=0$ suffices.
 \end{proof}

 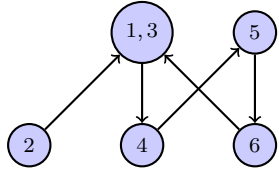
\begin{figure}[ht!bp]
\centering
\begin{tikzpicture}[->,thick,every node/.style={draw,circle,minimum size=0 mm, node distance=1.5 cm, font=\footnotesize,fill=blue!20}]
  \node[draw=none,fill=none] (1) at (0,0) {$$};
  \node (3) [right of=1] {$1,3$};
  \node (2) [below of=1] {$2$};
  \node (4) [right of=2] {$4$};
  \node (5) [right of=3] {$5$};
  \node (6) [below of=5] {$6$}; 
  
  \foreach \i [remember=\i as \lasti (initially 3)] in {4,5,6,3} {\draw (\lasti) -- (\i);};\
  \draw (2) -- (3);
  
\end{tikzpicture}
\caption{An instance with binary valuations where guaranteeing both EF1-over-rounds and maximum weight matching is impossible.}\label{fig:match-bin-counter}
\end{figure}

For some intuition behind Figure~\ref{fig:match-bin-counter}, note that are two cycles of desire: $(1,4,5,6)$ and $(3,4,5,6)$. Like in the previous counterexample, these cycles will cause problems, but here we have the additional consideration that agents 1 and 3 are competing for agent 4. We show that the frequency with which agents 4 and 5 are matched is at least the frequency with which \emph{either} agent 1 or agent 3 is matched with agent 4. For example, if agents 1 and 3 have each been matched to agent 4 twice, then agents 4 and 5 will have been matched 4 times. This leads to agents 1 and 3 increasingly envying agent 5, until EF1 is violated. 

The assumption of maximum weight is necessary only to prevent agents 2 and 5 from ever being matched: if agents 2 and 5 can be matched, the above argument can be circumvented.

\begin{theorem}\label{thm:bin-max-counter}
For binary valuations, there is no algorithm which is EF1-over-rounds and also chooses a maximum weight matching for each time step, even for non-dynamic valuations.
\end{theorem}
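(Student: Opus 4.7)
The approach is to exploit the maximum-weight constraint to pin down the structure of each $x^t$, then combine EF1 bounds across a few carefully chosen pairs of agents to force $T$ to be a bounded constant -- contradicting the fact that a valid sequence must be EF1 at every time step.

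First, I would verify that the only edge $\{i,j\}$ in the bipartite graph with $v_i(j) + v_j(i) = 2a$ is $\{5,2\}$, while every other edge achieves $1+a$. Consequently, any perfect matching of maximum weight must avoid $\{5,2\}$; at every time step $t$, agent $5$ is paired with either $4$ or $6$, and agent $2$ is paired with either $1$ or $3$. Next I would introduce cumulative counters at time $T$: let $\beta = n_{45}^T$ denote how many time steps match $5$ with $4$, $\alpha = n_{56}^T = T - \beta$ the rest, and $c_1 = n_{14}^T$, $c_3 = n_{34}^T$. Since agent $4$ is matched with $5$ in exactly $\beta$ steps and with either $1$ or $3$ in the remaining $\alpha$ steps, $c_1 + c_3 = \alpha$.

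The heart of the argument is three EF1 inequalities. For the pair $(1,5)$: $v_1(X_1^T) = c_1 + a(T - c_1)$, and $v_1(X_5^T) = \beta + a\alpha$ because $X_5^T$ consists of $\beta$ fours (valued $1$ by agent $1$) and $\alpha$ sixes (valued $a$). Assuming $\beta \ge 1$, EF1 demands that removing one $4$ from $X_5^T$ eliminates $1$'s envy, which after simplification yields $c_1 \ge \beta - \frac{1}{1-a}$; the analogous bound for $(3,5)$ gives $c_3 \ge \beta - \frac{1}{1-a}$. Summing, $\alpha \ge 2\beta - \frac{2}{1-a}$. For the pair $(6,2)$: agent $2$'s bundle contains only $1$s and $3$s, both valued $1$ by agent $6$, so $v_6(X_2^T) = T$, while $v_6(X_6^T) = a\alpha + \beta$. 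When $\alpha \ge 1$, the EF1 requirement $T - 1 \le v_6(X_6^T)$ simplifies to $\alpha \le \frac{1}{1-a}$.

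Combining gives $\frac{1}{1-a} \ge \alpha \ge 2\beta - \frac{2}{1-a}$, so $\beta \le \frac{3}{2(1-a)}$, and thus $T = \alpha + \beta \le \frac{5}{2(1-a)}$, a constant depending only on $a$. Choosing any $T$ exceeding this bound produces the desired contradiction. The degenerate cases $\alpha = 0$ (so $\beta = T$ and the $(1,5)$ bound forces $T \le \frac{1}{1-a}$) and $\beta = 0$ (so $\alpha = T$ and the $(6,2)$ bound forces $T \le \frac{1}{1-a}$) are similar.

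The main obstacle, as I see it, is identifying \emph{which} envy pairs yield binding constraints. Most pairs (for instance $(4,2)$, $(2,4)$, $(2,6)$, $(1,3)$, and even $(4,6)$) produce trivially satisfied or strictly weaker inequalities, and only the combination of $(1,5)$, $(3,5)$, and $(6,2)$ jointly pins $T$. The maximum-weight hypothesis is used exactly once -- to guarantee $\{5,2\} \notin x^t$ -- which is what makes $v_6(X_2^T) = T$ and thereby drives the upper bound on $\alpha$; without it, agent $2$ could occasionally be paired with $5$ and the $(6,2)$ constraint would disappear.
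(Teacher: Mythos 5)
Your proof is correct, and it uses the same counterexample instance (Figure~\ref{fig:match-bin-counter}) and the same pivotal structural fact as the paper --- that maximum weight forces the edge $\{2,5\}$ to never appear, so agent 5 alternates between 4 and 6 while agent 2 alternates between 1 and 3 --- but the way you extract the contradiction is genuinely different. The paper runs an induction over pairs of rounds, maintaining two exact invariants (that $v_4(X_4^t)=v_4(X_6^t)$ at even times, and that the combined envy of agents 1 and 3 toward agent 5 equals $r$ after $2r$ steps), and exhibits an EF1 violation at the concrete time $t=6$; the envy pair it uses to balance agent 5's matches is $(4,6)$. You instead write down cumulative counters $\alpha,\beta,c_1,c_3$ at an arbitrary horizon $T$ and combine three EF1 inequalities --- from $(1,5)$ and $(3,5)$ to get $\alpha\ge 2\beta-\tfrac{2}{1-a}$, and from $(6,2)$ (a different pair than the paper's) to get $\alpha\le\tfrac{1}{1-a}$ --- concluding $T\le\tfrac{5}{2(1-a)}$. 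Your version buys two things: it works uniformly for every $a\in[0,1)$ rather than just the $\{0,1\}$ case the paper restricts to, and it avoids the parity bookkeeping of the two-round induction (indeed, for $a=0$ it localizes the failure by $T=3$ rather than $t=6$). The paper's version buys an exact description of what every EF1-compliant prefix must look like. Two minor cosmetic points: the side condition you state for the $(6,2)$ inequality should be $T\ge 1$ (so that $X_2^T$ is nonempty) rather than $\alpha\ge 1$, and the ``degenerate cases'' paragraph is unnecessary since the inequalities $c_1,c_3\ge\beta-\tfrac{1}{1-a}$ and $\alpha\le\tfrac{1}{1-a}$ hold vacuously when $\beta=0$ or $\alpha=0$; neither affects correctness.
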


\begin{proof}
Consider the instance given by Figure~\ref{fig:match-bin-counter}. Here $N = \{1,3,5\}$ and $M = \{2,4,6\}$, and an edge from agent $i$ to agent $j$ indicates $v_i(j) = 1$. For this example, we will assume $v_i(j) \in \{0,1\}$ for all $i,j$. We represent the pair $\{1,3\}$ as a single node in Figure~\ref{fig:match-bin-counter} to indicate that those are identical agents, i.e., $v_1(i) = v_3(i)$ and $v_i(1) = v_i(3)$ for all $i \in M$.

On each time step, we must choose a perfect matching between $N$ and $M$. Recall that the weight of a matching $x$ is $\frac{1}{2} \sum_{\{i,j\} \in x} (v_i(j) + v_j(i))$. We claim that any maximum weight matching in this instance has weight 1.5. A weight of above 1.5 is impossible, since that would require matching two agents who both like each other; no such pair of agents exists in this instance. The matching $\{\{1,2\}, \{3,4\}, \{5,6\}\}$ has weight 1.5, so a weight of 1.5 is possible. Thus any maximum weight time step matching has weight 1.5.

We are going to be interested in ever \emph{pair} of matches each agent receives, so we will look at $t = 2r$ for integers $r$. First, we claim that whenever $t=2r$ for an integer $r$, the following hold:
\begin{enumerate}
\item $v_4(X_4^t) = v_4(X_6^t)$
\item $(v_1(X_5^t) - v_1(X_1^t)) + (v_3(X_5^t) - v_3(X_3^t)) = r$
\end{enumerate}

We proceed by induction on $r$. The claims trivially hold when $r = 0$, since $X_i^0 = \emptyset$ for all $i$. Thus assume the claims hold for $t=2r$. At $t' = 2(r+1) = 2r + 2$, each agent will have received two new matches; let $Y$ be the multiset of new pairings during this period.

First note that agents 2 and 5 can never be matched, since that would result in a matching weight of 2 for some time step. Thus agent 5 can only be matched to agents 4 and 6. We claim that $\{5,4\}$ and $\{5,6\}$ each appear in $Y$ exactly once. If $\{5,6\}$ appeared twice, we would have $v_4(X_6^{t'}) - v_4(X_4^{t'}) = 2$, which would violate EF1. If $\{5,4\}$ appeared twice, $v_1(X_1^t)$ and $v_3(X_3^t)$, while $v_1(X_5^t)$ and $v_3(X_5^t)$ both increased by 2. Thus we would have $v_i(X_5^{t'}) - v_i(X_i^{t'}) \ge 2$ for at least one $i \in \{1,3\}$, which would violate EF1.

Thus $\{5,4\}$ and $\{5,6\}$ each appear in $Y$ exactly once. This immediately satisfies $v_4(X_4^{t'}) = v_4(X_6^{t'})$, since agent 4 only likes agent 5. Also, agent 4 will be matched to exactly one of agents 1 and 3. Thus $v_1(X_1^t) + v_3(X_3^t)$ increases by 1, while $v_1(X_5^t)$ and $v_3(X_5^t)$ each increase by 1. Thus $(v_1(X_5^t) - v_1(X_1^t)) + (v_3(X_5^t) - v_3(X_3^t))$ increases by 1, as required.

This completes the induction. Claim 2 implies that at $t=6$, we have $(v_1(X_5^t) - v_1(X_1^t)) + (v_3(X_5^t) - v_3(X_3^t)) = 3$. But then we have $v_i(X_5^t) - v_i(X_i^t) = 2$ for some $i \in \{1,3\}$, which implies that $X^t$ is not EF1. But we know that $|X_i^t| = |X_j^t|$ for all $i,j$, so this contradicts the claim that $\X$ is EF1-over-rounds.
\end{proof}

\section{Beyond symmetric valuations}\label{sec:asym-bin}

%


In this section, we show that it is possible to go beyond symmetric binary valuations, at least a bit: we give an algorithm that is EF1-over-time for binary $\{0,1\}$ valuations under the assumption that every ``cycle" in those valuations is symmetric. We define ``cycle" and ``symmetric cycle" later, but this assumption generalizes symmetric binary $\{0,1\}$ valuations. It is important to note that the result of this section holds only for binary $\{0,1\}$ valuations, not all $\{a,1\}$ valuations for any $a \in [0,1]$. However, recall that both of the counterexamples given in Section~\ref{sec:counter} also use $\{0,1\}$ valuations.

Since we are considering EF1-over-time, $x^t$ will always contain a single match, i.e., $x^t = \{\{i,j\}\}$. We will abuse notation and simply write $x^t = \{i,j\}$, as in the pseudocode from Algorithm~\ref{alg:asym-bin}.

\begin{algorithm*}[htb]
\centering
\begin{algorithmic}[1]
\Function{EF1Matching}{$N, M, \V$}
\ForAll{$t\in \mathbb{N}_{> 0}$}
    \State $x^t \gets$ any pairing $(i,k)$ s.t. $v_i(k) + v_k(i) \ge 1$
    \While{$\exists$ agents $i,j$ s.t. $v_i(X_j^t) - v_i(X_i^t) > 1$}
    	\State $x^t \gets x^t\cup\{i\}\setminus\{j\}$
    \EndWhile
    \State \texttt{MakeMatch}($x^t$)
  \EndFor
\EndFunction
\end{algorithmic}
\caption{An EF1-over-time algorithm for binary valuations with only symmetric cycles.}
\label{alg:asym-bin}
\end{algorithm*}

Our algorithm is very simple. On each time step, we propose a pairing $\{i,k\}$ that at least one of $i$ and $k$ is happy with. Then while there exists an agent $j$ such that making this match would violate EF1 from $j$'s perspective, we swap $j$ into the pairing for this time step. We will need to show that the while loop of Algorithm~\ref{alg:asym-bin} always terminates. As in Section~\ref{sec:sym-bin}, the crucial step will be showing that the cumulative matching can never have an envy cycle. We will show that the presence of an envy cycle would contradict the assumption of ``only symmetric cycles".

To define ``symmetric cycles", we must first define the \emph{desire graph}.

\begin{definition}
For a (non-dynamic) valuation profile $\V$, the \emph{desire graph} is an undirected graph with a vertex for each agent in $N\cup M$, and an edge from $i$ to $k$ if $v_i(k) + v_k(i) \ge 1$.
\end{definition}

We say that an edge $\{i,k\}$ in the desire graph is symmetric if $v_i(k) = v_k(i) = 1$, and a cycle in the desire graph is symmetric if every edge is symmetric.

\begin{lemma}\label{lem:1-bounded}
Assume that $v_i(k) \in \{0,1\}$ for any pair of agents $i,k$, and that at time $t$, $v_i(X_j^t) - v_i(X_i^t) \le 1$ for all agents $i,j$. Then $X^t$ is EF1.
\end{lemma}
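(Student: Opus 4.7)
The plan is to follow the same template used for Lemma~\ref{lem:envy-bounded}, with the simplification that $a=0$ here so values are integer-valued. Suppose for contradiction (or just directly) that $i$ envies $j$ under $X^t$. The goal is to exhibit a single match $k \in X_j^t$ whose removal eliminates the envy.

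First I would observe that since $v_i(k) \in \{0,1\}$ and the valuation is additive, $v_i(X_j^t)$ is a nonnegative integer. Envy means $v_i(X_j^t) > v_i(X_i^t) \ge 0$, so in particular $v_i(X_j^t) \ge 1$. Therefore $X_j^t$ must contain some $k$ with $v_i(k) = 1$ (otherwise $v_i$ would value every match in $X_j^t$ at $0$ and we would get $v_i(X_j^t)=0$, contradicting envy).

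Next I would use this $k$ as the match to remove. By additivity, $v_i(X_j^t \setminus \{k\}) = v_i(X_j^t) - 1$. Combining with the hypothesis $v_i(X_j^t) - v_i(X_i^t) \le 1$, we get
\[
v_i(X_j^t \setminus \{k\}) = v_i(X_j^t) - 1 \le v_i(X_i^t),
\]
which is precisely the EF1 condition of Definition~\ref{def:ef1} for the pair $(i,j)$. Since $i,j$ were arbitrary, $X^t$ is EF1.

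There is no real obstacle here: the lemma is a direct specialization of the integrality argument in Lemma~\ref{lem:envy-bounded} to the case $a=0$, and the only thing to be careful about is noting that binary $\{0,1\}$ additive valuations force $v_i(X_j^t)$ to be an integer so that envy of strictly positive magnitude automatically has magnitude at least one, which is what lets us drop a single value-$1$ match to close the gap.
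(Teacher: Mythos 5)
Your proof is correct and matches the paper's argument essentially verbatim: both identify a match $k \in X_j^t$ with $v_i(k)=1$ (which must exist since envy forces $v_i(X_j^t)\ge 1$) and use additivity together with the hypothesis $v_i(X_j^t) - v_i(X_i^t) \le 1$ to conclude $v_i(X_j^t\setminus\{k\}) \le v_i(X_i^t)$. Your added remark about integrality is a harmless elaboration of the same reasoning.
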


\begin{proof}
Suppose $i$ envies $j$ under $X^t$. Then there must exist $k \in X_j^t$ such that $v_i(k) = 1$. Therefore
\[
v_i(X_j^t\setminus\{k\}) = v_i(X_j^t) - 1 \le v_i(X_i^t)
\]
Thus $X^t$ is EF1.
\end{proof}

\begin{lemma}\label{lem:asym-bin-terminates}
Suppose that $X^{t-1}$ has no envy cycles. Then the while loop for time step $t$ has at most $n + m$ iterations before terminating.
\end{lemma}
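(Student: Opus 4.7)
The plan is to show that each iteration of the while loop traces a new edge in the envy graph of $X^{t-1}$ restricted to one side of the market, and then to exploit the DAG structure to bound the total number of iterations. Because cross-side envy is forced to be zero (via $v_i(j)=0$ whenever $i,j$ are on opposite sides), any pair $i,j$ that triggers the condition $v_i(X_j^t) - v_i(X_i^t) > 1$ must lie on the same side. Moreover, for the operation $x^t \cup\{i\}\setminus\{j\}$ to produce a valid pair, $j$ must currently sit in $x^t$ while $i$ must not. Consequently, each iteration changes exactly one of the two pair-members (either the $N$-member or the $M$-member) while leaving the opposite partner untouched.

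The central observation is: if $i$ replaces $j$ in an iteration, then $i$ envies $j$ in $X^{t-1}$. Let $r$ denote the unchanged opposite-side partner in the pair just before the swap, so that $X_j^t = X_j^{t-1}\cup\{r\}$ while $X_i^t = X_i^{t-1}$. Expanding the trigger condition using additivity gives
\[
v_i(X_j^{t-1}) + v_i(r) - v_i(X_i^{t-1}) > 1.
\]
Since valuations are in $\{0,1\}$ we have $v_i(r)\le 1$, so $v_i(X_j^{t-1}) - v_i(X_i^{t-1}) > 1 - v_i(r) \ge 0$, which means $i$ envies $j$ in $X^{t-1}$.

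Next, restrict attention to the iterations that swap on the $N$ side and let $p_0,p_1,\dots,p_{k_N}$ be the successive $N$-members of the pair, where $p_0$ is the initial $N$-member and each $p_{\ell+1}$ replaces $p_\ell$. By the central observation, $p_{\ell+1}$ envies $p_\ell$ in $X^{t-1}$, so the sequence traces a directed path $p_{k_N} \to p_{k_N-1} \to \cdots \to p_0$ in the envy graph of $X^{t-1}$. Any repetition $p_a = p_b$ with $a<b$ would close this path into an envy cycle, contradicting the hypothesis. Hence the $p_\ell$ are distinct and $k_N + 1 \le n$; the symmetric argument yields $k_M + 1 \le m$ for swaps on the $M$ side. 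Summing, the total number of iterations is at most $(n-1)+(m-1) \le n+m$.

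The main obstacle to watch is the central observation: it genuinely uses the $\{0,1\}$-binary assumption to prevent $v_i(r)$ from absorbing all of the excess envy in the trigger condition. Once that step is clean, the rest is a straightforward path-counting argument inside the acyclic envy graph guaranteed by the hypothesis.
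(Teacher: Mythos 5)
Your proposal is correct and follows essentially the same route as the paper: show that each swap replaces $j$ by an agent $i$ who already envied $j$ under $X^{t-1}$ (using that a single match adds at most $1$ to a bundle's value), then observe that the successive same-side members of the pair trace a directed path in the acyclic envy graph of $X^{t-1}$ and hence cannot repeat, giving at most $n$ changes on the $N$ side and $m$ on the $M$ side. Your version is slightly more explicit about the path-tracing and even yields the marginally tighter bound $(n-1)+(m-1)$, but the argument is the same.
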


\begin{proof}
Consider an arbitrary iteration of the while loop, and let $i,j$ be as defined in Algorithm~\ref{alg:asym-bin}. First, we claim that $i$ envies $j$ under $X^{t-1}$. Suppose the opposite: since $i$'s value for $j$'s bundle can increase by at most 1 on each time step, we have $v_i(X_j^t) - v_i(X_i^t) \le 1$, which contradicts the while loop condition. Therefore $i$ envies $j$ under $X^{t-1}$.

Thus on every iteration of the while loop, a match $\{j,k\}$ changes to $\{i,k\}$ for an agent $i$ who envies $j$. Note that the match always involves one agent in $N$ and one agent in $M$. Since there are no envy cycles in $X^{t-1}$ by assumption, the agent in $N$ in $x^t$ can change at most $n$ times, and the agent in $M$ in $x^t$ can change at most $m$ times. Thus the while loop terminates after at most $n+m$ iterations.
\end{proof}

To prove correctness, we will use the notion of a \emph{circuit} in the envy graph.

\begin{definition}
Given a graph, a \emph{circuit} is a sequence of (not necessarily distinct) vertices $(u_1, u_2, \dots, u_\ell)$ such that for each $i \in \{1\dots \ell\}$, the edge $(u_i, u_{i\bmod \ell + 1})$ is in the graph.
\end{definition}

In words, a circuit is a cycle that is allowed to repeat vertices. For a given circuit of length $\ell$, define $\suc(i) = i\bmod \ell + 1$. For a graph without self-loops (such as the envy graph), for each $i$, $u_i$ and $u_{\suc(i)}$ are distinct vertices.

Note that in the following lemma statement, $i$ itself is a vertex, not $u_i$. This means $i$ is independent of the labeling of $C$, which is useful because Algorithm~\ref{alg:circuit} changes said labeling. For Lemma~\ref{lem:circuit}, we will use $\suc(i) = i\bmod \ell + 1$.

\begin{lemma}\label{lem:circuit}
Let $C = (u_1, u_2, \dots, u_\ell)$ be a circuit in a graph $G$ without self-loops. Then for each $i \in C$, there exists a cycle in $G$ which contains the edge $(i, \suc(i))$.
\end{lemma}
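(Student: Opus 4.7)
The plan is to fix an index $i$ and exhibit an explicit cycle in $G$ containing the edge $e = (u_i, u_{\suc(i)})$, by using the remainder of the circuit as a walk back to $u_i$ and then shortcutting it into a simple path. Removing the step $e$ from the closed walk $C$ leaves the walk
\[
W \;=\; \bigl(u_{\suc(i)},\, u_{\suc(\suc(i))},\, \ldots,\, u_{\pred(i)},\, u_i\bigr)
\]
from $u_{\suc(i)}$ back to $u_i$; this is a legitimate walk in $G$ because, by the definition of a circuit, every pair of consecutive entries of $W$ is joined by an edge of $G$.

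Next I would invoke the standard fact that any walk from $a$ to $b$ in a graph contains a simple path from $a$ to $b$: while some vertex appears more than once in the current walk, excise the subwalk strictly between two of its occurrences; this strictly shortens the walk and therefore terminates in a simple path $P$ from $u_{\suc(i)}$ to $u_i$. Because $G$ has no self-loops and the edge $e$ lies in $G$, the endpoints satisfy $u_i \neq u_{\suc(i)}$, so $P$ contains at least one edge.

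Finally, I would verify that $e$ itself cannot appear inside $P$: since $P$ is simple, each of $u_i$ and $u_{\suc(i)}$ appears exactly once in $P$ (at an endpoint), so the edge going from $u_i$ to $u_{\suc(i)}$ cannot be traversed anywhere along $P$. Hence $P$ together with $e$ is a simple cycle in $G$ that contains $e$, as required. There is no real obstacle here: the only mild subtlety is the walk-to-path shortcutting, which is completely standard, and the no-self-loop hypothesis is used exactly to guarantee that the endpoints of $P$ are distinct, so that attaching $e$ produces a genuine cycle rather than a degenerate closed walk.
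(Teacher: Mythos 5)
Your proof is correct and follows essentially the same route as the paper's: both arguments repeatedly excise the subwalk between two occurrences of a repeated vertex in the closed walk while protecting the edge $(i,\suc(i))$, terminating in a simple cycle through that edge. The only difference is packaging --- the paper keeps the protected edge pinned at the wrap-around position $(u_{|D|},u_1)$ of an explicit procedure (Algorithm~\ref{alg:circuit}), whereas you detach the edge first, shortcut the remaining walk to a simple path, and reattach it.
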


\begin{proof}
We claim that Algorithm~\ref{alg:circuit} returns a cycle containing the edge $(i, \suc(i))$. We will use $C$ to refer to the original circuit, and $D$ to refer to the variable in the algorithm. First, note that since at least one vertex is removed from $D$ on each iteration of the while loop, the algorithm is guaranteed to terminate.

\textbf{Claim 1: $i = u_{|D|}$ and $\suc(i) = u_1$ after every iteration.} Note that this claim implies that the edge $(i, \suc(i))$ always remains in $D$. To prove the claim, we trivially always have $\suc(i) = u_1$, since we never remove or relabel $u_1$. For $i = u_{|D|}$, if $k < |D|$, the claim trivially holds, since $u_{|D|}$ is unaffected. If $k = |D|$, then the last vertex in the new $D$ is $u_j$, which is equal to $u_k$ (and thus equal to $u_{|D|}$) by construction. Thus $i = u_{|D|}$ and $\suc(i) = u_1$ after every iteration, and the edge $(i, \suc(i))$ always remains in $D$.

\textbf{Claim 2: The algorithm returns a circuit $D$.} Initially $D$ is a circuit by assumption. Suppose that before some iteration of the while loop, $D=(u_1, u_2,\dots, u_{|D|})$ is a circuit. Let $j,k$ be defined as in Algorithm~\ref{alg:circuit}; we claim that $(u_1, u_2,\dots, u_j, u_{k+1},\dots, u_{|D|})$ is a circuit. 

To do this, we must show that each of these edges is in the graph. The edge $(u_{|D|}, u_1)$ must be in the graph by Claim 1. For each $q \in \{1\dots j-1, k+1\dots |D| -1$, the edge $(u_q, u_{q+1})$ is also in $D=(u_1, u_2,\dots, u_{|D|})$. Since $D$ is a circuit, this edge must be in the graph. Finally, we consider the edge $(u_j, u_{k+1})$. Since $D$ is a circuit, we know that $(u_k, u_{k+1})$ must be in the graph. Since $u_j = u_k$, $(u_j, u_{k+1})$ is in the graph. Therefore $(u_1, u_2,\dots, u_j, u_{k+1},\dots, u_{|D|})$ is a circuit.

Thus the algorithm is guaranteed to return a circuit. Furthermore, the circuit returned by the algorithm must consist of distinct vertices: if any vertex repeated, the while loop would not have terminated. Thus the algorithm returns a cycle. Finally, by Claim 1, this cycle contains the edge $(i, \suc(i))$, as required.
\end{proof}

\begin{algorithm*}[htb]
\centering
\begin{algorithmic}[1]
\Function{FindCycleInCircuit}{$C = (u_1\dots u_\ell), i$}
\State relabel $D = (u_1, u_2, \dots, u_\ell)$ so that $i = u_\ell$ and $\suc(i) = u_1$
\While{$\exists j,k \in \{1\dots |D|\}$ such that $u_j = u_k$} \Comment{Assume WLOG that $j < k$}
	\State $D \gets (u_1, u_2,\dots, u_j, u_{k+1},\dots, u_{|D|})$ \Comment{Remove $u_{j+1}\dots u_k$}
	\State relabel $D = (u_1, u_2,\dots u_{|D|})$
\EndWhile
\Return $D$
\EndFunction
\end{algorithmic}
\caption{Given a circuit $C$ containing a vertex $i$, this algorithm finds a cycle containing the edge $(i, \suc(i))$.}
\label{alg:circuit}
\end{algorithm*}

\begin{theorem}\label{thm:asym-bin}
Assume that $v_i(k) \in \{0,1\}$ for any pair of agents $i,k$, and that every cycle in the desire graph is symmetric. Then Algorithm~\ref{alg:asym-bin} is EF1-over-time.
\end{theorem}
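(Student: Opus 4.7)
The plan is to prove the theorem by induction on the time step $t$, maintaining two invariants for the cumulative matching $X^t$: that it is $1$-envy-bounded (i.e., $v_i(X_j^t) - v_i(X_i^t) \le 1$ for all $i,j$) and that its envy graph is acyclic. Given these invariants, Lemma~\ref{lem:1-bounded} yields EF1 and Lemma~\ref{lem:asym-bin-terminates} guarantees termination of the while loop at each time step. The base case $X^0 = \emptyset$ is trivial.

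A preliminary step, used throughout, is to verify that every match ever confirmed by the algorithm is an edge of the desire graph. The initial $x^t$ is a desire edge by construction, and each swap replacing $x^t = \{j,k\}$ with $\{i,k\}$ preserves this: the while-loop condition and 1-envy-boundedness of $X^{t-1}$ together force $v_i(X_j^{t-1}) + v_i(k) - v_i(X_i^{t-1}) > 1$ while $v_i(X_j^{t-1}) - v_i(X_i^{t-1}) \le 1$, which (since $v_i(k) \in \{0,1\}$) forces $v_i(k) = 1$, so $\{i,k\}$ is a desire edge.

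For the inductive step, 1-envy-boundedness of $X^t$ is immediate from the while-loop exit condition. The hard part is acyclicity of the envy graph of $X^t$. I would proceed by contradiction. Suppose $(u_1, \ldots, u_\ell)$ is an envy cycle in $X^t$. Since $u_i$ envies $u_{i+1}$ (indices mod $\ell$) and valuations are $\{0,1\}$, there exists $k_i \in X_{u_{i+1}}^t$ with $v_{u_i}(k_i) = 1$. The sequence $(u_1, k_1, u_2, k_2, \ldots, u_\ell, k_\ell)$ is then a circuit in the desire graph: $(u_i, k_i)$ is a desire edge because $v_{u_i}(k_i) = 1$, and $(k_i, u_{i+1})$ is a desire edge by the preliminary claim applied to the match that placed $k_i$ in $X_{u_{i+1}}^t$.

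Applying Lemma~\ref{lem:circuit}, every edge of this circuit lies on some cycle in the desire graph, and by the only-symmetric-cycles hypothesis every such cycle, hence every edge of the circuit, is symmetric. Crucially, the same construction with any $k \in X_{u_{i+1}}^t$ satisfying $v_{u_i}(k) = 1$ substituted for $k_i$ still yields a valid circuit, and the same reasoning gives $v_{u_{i+1}}(k) = 1$. This gives the containment $\{k \in X_{u_{i+1}}^t : v_{u_i}(k) = 1\} \subseteq \{k \in X_{u_{i+1}}^t : v_{u_{i+1}}(k) = 1\}$, whence $v_{u_i}(X_{u_{i+1}}^t) \le v_{u_{i+1}}(X_{u_{i+1}}^t)$ for each $i$. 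Summing the strict envy inequalities around the cycle then yields $\sum_i v_{u_i}(X_{u_{i+1}}^t) > \sum_i v_{u_i}(X_{u_i}^t) = \sum_i v_{u_{i+1}}(X_{u_{i+1}}^t) \ge \sum_i v_{u_i}(X_{u_{i+1}}^t)$, a contradiction. The main obstacle is this last claim: invoking Lemma~\ref{lem:circuit} flexibly enough (for every candidate $k$, not just a fixed $k_i$) to obtain the set containment, and confirming that the constructed sequence remains a valid circuit even when the $k_i$'s are not all distinct.
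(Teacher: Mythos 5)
Your proposal is correct and follows essentially the same route as the paper: the same two-part induction ($1$-envy-boundedness plus acyclicity of the envy graph), the same construction of a circuit in the desire graph alternating between the envy-cycle agents and their matched partners, the same appeal to Lemma~\ref{lem:circuit} and the only-symmetric-cycles hypothesis to force symmetry of those edges, and the same chaining of $v_{u_i}(X_{u_{i+1}}^t) \le v_{u_{i+1}}(X_{u_{i+1}}^t)$ around the cycle to reach a contradiction. Your explicit verification that every confirmed match is a desire edge (via the swap forcing $v_i(k)=1$) is a point the paper asserts more tersely, but it is the same argument.
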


\begin{proof}
We claim by induction on $t$ that following hold for each $X^t$: (1) $v_i(X_j^t) - v_i(X_i^t) \le 1$ for all agents $i,j$, and (2) there are no envy cycles in $X^t$. Importantly, Claim 2 in combination with Lemma~\ref{lem:asym-bin-terminates} will imply that each while loop does terminate, so the sequence $\X$ is well-defined.

At $t = 0$, $v_i(X_j^t) = 0$ for all $i,j$, so both claims follow trivially (and there is no while loop for $t=0$). Thus assume that both claims hold at an arbitrary time $t-1$. Since we do not exit the while loop until Claim 1 is satisfied, Claim 1 trivially holds at time $t$. Suppose that Claim 2 does not hold at time $t$, i.e., there exists an envy cycle in $X^t$. 

Let $C = (1, 2\dots {|C|})$ be an envy cycle in $X^t$, and let $\suc(i) = i \bmod |C| + 1$. Since $C$ is an envy cycle, for each $i \in C$, $i$ envies agent $\suc(i)$. Note that all agents in $C$ must be on the same side of the market; let $S$ be that side. We claim for each $i \in C$, there exists an agent $j_i\not\in S$ such that the edges $(i, j_i)$ and $(\suc(i), j_i)$ are both in the desire graph. Suppose this is not the case for some $i$: then for every agent $j \not \in S$ that agent $i$ likes, neither agent $j$ nor agent $\suc(i)$ are interested in each other. Therefore agent $\suc(i)$ would never be matched with any agent whom $i$ likes, so agent $i$ cannot envy agent $\suc(i)$. 

Let $J_i = \{j \not\in S: \text{$(i, j)$ and $(\suc(i), j)$ are in the desire graph}\}$. Then for any $j_1 \in J_1, j_2\in J_2\dots j_{|C|} \in J_{|C|}$, $(1, j_1, 2, j_2\dots {|C|/2}, j_{|C|/2})$ is a circuit\footnote{The reason vertices may be repeated is that $J_i \cap J_{i'}$ may not be empty, even for $i\ne i'$.} in the desire graph. Thus by Lemma~\ref{lem:circuit}, for all $i$ and all $j\in J_i$, $(i, j)$ and $(\suc(i), j)$ each belong to a cycle in the desire graph. Therefore all such edges must be symmetric, by assumption.

We claim that as a consequence, for all $i$, $v_\suc(i)(X_\suc(i)^t) > v_{i}(X_{i}^t)$. Without loss of generality, fix $i=1$ for brevity. We know that we only ever make a match $(i,j)$ if that edge is in the desire graph. Consider an agent $j \in X_2^t$ such that $v_1(j) = 1$: then the edges $(1,j)$ and $(2,j)$ are both in the desire graph. Since these edges must then by symmetric, we have $v_2(j) = 1$.

Therefore for every $j \in X_2^t$ such that $v_{1}(j) = 1$, we have $v_{2}(j) = 1$. This implies that $v_{2}(X_{2}^t) \ge v_{1}(X_{2}^t)$. Since $v_{1}(X_{2}^t) > v_{1}(X_{1}^t)$ (agent $1$ must envy agent $2$, because $C$ is an envy cycle), we have $v_{2}(X_{2}^t) > v_{1}(X_{1}^t)$. By symmetry, this holds for all $i$. Applying this relationship around the cycle, we get $v_{1}(X_{1}^t) > v_{1}(X_{1}^t)$, which is a contradiction. We conclude that there is no envy cycle in $X^t$.

This completes the induction. Thus at every time $t$, for every pair $i,j$, $v_i(X_j^t) - v_i(X_i^t) \le 1$. Then by Lemma~\ref{lem:1-bounded}, $X^t$ is EF1 for all $t$. Therefore $\X$ is EF1-over-time.
\end{proof}

\section{Beyond binary valuations}\label{sec:n=2}

In this section, we depart from our assumption that $|N| = |M|$, and consider the case when one side of the market (without loss of generality, say $N$) has two agents. Alternatively, one can think of $N$ as having two non-dummy agents and $n-2$ dummy agents, with $M$ having $n$ agents.

In this case, for additive (not necessary binary) valuations, we give a matching algorithm which produces a sequence $\X$ which is EF1-over-time. Our algorithm is based on the round-robin algorithm for EF1 in one-sided markets with additive valuations, where agents take turns choosing their favorite remaining item from an available pool. As in Section~\ref{sec:asym-bin}, we abuse notation and write $x^t = \{i,j\}$ instead of $x^t = \{\{i,j\}\}$.

Algorithm~\ref{alg:n=2} divides time into stages. Each stage will match each $i \in M$ exactly twice: in particular, once to agent $1 \in N$ and once to agent $2 \in N$. This also means that each stage matches each agent in $N$ to every agent in $M$ exactly once. The key consequence is that at the end of each stage, the cumulative matching is fully envy-free. The only remaining technical consideration is what happens within a stage.

Each stage has two phases. In the first phase, agents 1 and 2 in $N$ alternate picking their favorite agent in $M$ without replacement, starting with agent 1. We record the order of these matches in $\sigma$. In the second phase, we match agents in $M$ in the same order as in phase 1, but this time, we start with agent 2. This ensures that agent 1 gets matched to exactly those agents in $M$ she was not matched to during phase 1 (and similarly for agent 2). For a given sequence of matches $\sigma$, let $X(\sigma)$ be the corresponding matching. For example, if $\sigma = ((1,1), (2,2), (1,3))$, then $X_1(\sigma) = \{1,3\}$ and $X_2(\sigma) = \{2\}$.

\begin{algorithm*}[htb]
\centering
\begin{algorithmic}[1]
\Function{EF1Matching}{$N, M, \V$}
\State $t \gets 1$
\ForAll{$s\in \mathbb{N}_{\ge 0}$}
	\State $P \gets M$ \Comment{Initially, the pool contains all agents in $M$.}
	\State $\sigma \gets []$\
	\State $i\gets 1$
	\While{$P\ne \emptyset$} \Comment{Until everyone in $M$ has been matched once,}
		\State $j \gets \argmax_{k \in P} v_i(k)$ \Comment{agent $i$ chooses her favorite unmatched agent in $M$.}
		\State $x^t \gets \{i, j\}$ \Comment{Match those agents,}
		\State $\texttt{MakeMatch}(x^t)$
		\State $P \gets P\setminus\{j\}$ \Comment{remove the matched agent in $M$ from the pool,}
		\State $\sigma.\texttt{append}((i, j))$ \Comment{record this match,}
		\State $i \gets (i+1)\bmod 2$ \Comment{and switch to the other agent in $N$.}
		\State $t++$
	\EndWhile
	\State $i\gets 2$
	\ForAll{$z \in \sigma$} \Comment{Go through the matches we already made in order,}
		\State $(\_, j) \gets z$ \Comment{take the agent in $M$ from each match,}
		\State $x^t \gets \{i, j\}$ \Comment{and match her to the agent in $N$ she was not matched to before.}
		\State $\texttt{MakeMatch}(x^t)$
		\State $i \gets (i+1) \bmod 2$
		\State $t++$
	\EndFor
  \EndFor
\EndFunction
\end{algorithmic}
\caption{An algorithm that is EF1-over-time when one side of the market has two agents.}
\label{alg:n=2}
\end{algorithm*}

Lemma~\ref{lem:sigma} is the key to our algorithm. For those familiar with the round robin algorithm in one-sided fair division, the idea is the same.

\begin{lemma}\label{lem:sigma}
If $\sigma'$ is a prefix or suffix of $\sigma$, then $X(\sigma')$ is EF1 with respect to $N$.
\end{lemma}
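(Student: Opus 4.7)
The plan is to recognize that within $\sigma'$ (whether a prefix or a suffix), the picking structure of phase 1 is preserved: the two agents of $N$ still alternate, and each still picks her favorite agent from the current pool at the moment of her turn. The only thing that changes between the prefix and suffix cases is which of agents $1$ and $2$ is the ``first picker'' of $\sigma'$. This reduces everything to the classical round-robin EF1 argument for two agents.

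Concretely, I would let $i$ denote whichever agent in $N$ takes the first pick of $\sigma'$ (so $i=1$ for any prefix, while for a suffix $i$ depends on the parity of the starting index) and let $i'$ be the other agent of $N$. Enumerate $i$'s picks within $\sigma'$ in order as $a_1, a_2, \dots$ and $i'$'s picks as $b_1, b_2, \dots$; so $X_i(\sigma') = \{a_1, a_2, \dots\}$ and $X_{i'}(\sigma') = \{b_1, b_2, \dots\}$. The two key observations are:
\begin{enumerate}
\item When $i$ picks $a_p$, the agent $b_p$ has not yet been chosen (because $i'$ picks $b_p$ strictly after $a_p$), so $b_p$ was in the pool at that time and hence $v_i(a_p) \ge v_i(b_p)$.
\item When $i'$ picks $b_p$, the agent $a_{p+1}$ has not yet been chosen (because $i$ picks $a_{p+1}$ strictly after $b_p$), so $v_{i'}(b_p) \ge v_{i'}(a_{p+1})$.
\end{enumerate}

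Summing (1) over $p$ and using additivity gives $v_i(X_i(\sigma')) \ge v_i(X_{i'}(\sigma'))$, so $i$ does not envy $i'$. Summing (2) over $p$ gives $v_{i'}(X_{i'}(\sigma')) \ge v_{i'}(X_i(\sigma') \setminus \{a_1\})$, which is exactly the EF1 condition for $i'$'s (possible) envy toward $i$, witnessed by removing $a_1$. Since $N$ has only two agents, this covers every pair and shows $X(\sigma')$ is EF1 with respect to $N$.

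The only subtle point — which I would flag but not belabor — is the suffix case: one must check that the alternating-and-picking-favorite structure is genuinely preserved on a suffix. This follows immediately from the definition of phase 1 in Algorithm~\ref{alg:n=2}, since the pool state at the start of the suffix is just $M$ minus the items chosen in the prefix, and the remainder of phase 1 continues to pick favorites from this evolving pool. So the same two inequalities above apply verbatim, only with the identities of ``first picker'' and ``second picker'' possibly swapped relative to the prefix case. I expect this book-keeping to be the only thing requiring care; the core round-robin inequalities are standard and short.
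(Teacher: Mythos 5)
Your proof is correct and takes essentially the same approach as the paper: your inequalities (1) and (2) are exactly the paper's pairing $v_i(\sigma''_1)\ge v_i(\sigma''_2)$, $v_i(\sigma''_3)\ge v_i(\sigma''_4),\dots$ applied once with the first picker and once with the second picker (the paper implements the second case by dropping the first match of $\sigma'$ to form $\sigma''$, which corresponds to your removal of the witness $a_1$). The suffix bookkeeping you flag is handled identically in the paper, since the key fact---each pick is preferred to every agent appearing later in $\sigma$---is a property of the full sequence $\sigma$ and survives restriction to any contiguous subsequence.
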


\begin{proof}
We know that for each match $(i,j)$ that occurs in $\sigma$ (say $i \in N$ and $j \in M$), $i$ prefers $j$ to all agents in $M$ occurring after $j$ in $\sigma$ (since $j$ was $i$'s favorite remaining agent). Now consider an arbitrary $i \in N$, and let $\sigma'$ be any prefix or suffix of $\sigma$. If the first match in $\sigma'$ does not involve agent $i$, let $\sigma''$ be the suffix of $\sigma'$ obtained by removing the first match in $\sigma'$; otherwise, let $\sigma'' = \sigma'$. Thus the odd-numbered matches in $\sigma''$ belong to agent $i$, and the even-numbered matches in $\sigma''$ belong to the other agent in $N$.

Let $\sigma''_k$ denote the agent in $M$ in the $k$th match in $\sigma''$. Then we have $v_i(\sigma''_1) \ge v_i(\sigma''_2)$, $v_i(\sigma''_3) \ge v_i(\sigma''_4)$, and so on. Thus by additivity of valuations, agent $i$ does not envy the other agent under $X(\sigma'')$. We conclude that $X(\sigma')$ is EF1 with respect to $N$. 
\end{proof}

\begin{theorem}\label{thm:n=2}
Algorithm~\ref{alg:n=2} is EF1-over-time.
\end{theorem}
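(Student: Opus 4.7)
The plan is to prove by induction on $t$ that after each time step the cumulative matching $X^t$ is EF1 with respect to both $N = \{1,2\}$ and $M$. The key structural observation about Algorithm~\ref{alg:n=2} is that every fully completed stage adds exactly one copy of each $M$-agent to both $X_1$ and $X_2$, and adds exactly one copy of each of $\{1,2\}$ to every $X_j$ for $j \in M$. Consequently, the contributions of all completed stages cancel inside every envy difference $v_i(X_j^t) - v_i(X_i^t)$, and it suffices to check EF1 on the ``stage-only'' bundles $X^t|_{\text{stage}}$ of whichever stage is currently in progress.

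For envy between agents $1$ and $2$ during phase 1, the stage-only matching is precisely $X(\sigma')$ for some prefix $\sigma'$ of $\sigma$, and Lemma~\ref{lem:sigma} immediately gives EF1 with respect to $N$. During phase 2, having processed the first $m$ entries, a direct bookkeeping argument keyed to the parity of positions in $\sigma$ shows
\[
X_1^t|_{\text{stage}} = \{\mu_p : p \text{ odd},\ p \le n\} \cup \{\mu_p : p \text{ even},\ p \le m\},
\]
\[
X_2^t|_{\text{stage}} = \{\mu_p : p \text{ even},\ p \le n\} \cup \{\mu_p : p \text{ odd},\ p \le m\},
\]
where $\mu_p$ denotes the $M$-agent in position $p$ of $\sigma$. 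Both bundles share the common sub-bundle $\{\mu_p : p \le m\}$, so envy between $1$ and $2$ in the stage-only matching coincides with envy in $X(\sigma'')$, where $\sigma'' = (\sigma_{m+1}, \ldots, \sigma_n)$ is a suffix of $\sigma$. Lemma~\ref{lem:sigma} says $X(\sigma'')$ is EF1 with respect to $N$, and since the EF1 witness lies in $X(\sigma'')_j \subseteq X^t_j$, it remains a valid witness in $X^t$.

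For envy between agents $i, j \in M$, note that during any single stage each $M$-agent is matched at most once to agent $1$ and at most once to agent $2$ (phase 1 visits each $M$-agent once, and phase 2 visits each once more with the partner flipped). Hence the stage-only bundles take the form $\{1^{k_j}, 2^{\ell_j}\}$ with $k_j, \ell_j \in \{0,1\}$. Moreover, phase 2 only starts after every $M$-agent has been matched in phase 1, so whenever some agent has $k + \ell = 2$, every agent has $k + \ell \ge 1$. A short case analysis over the finitely many $(k_j, \ell_j, k_i, \ell_i)$-configurations (under the usual non-negativity of valuations) shows that the stage-only envy is at most $\max(v_i(1), v_i(2))$ and is witnessed by a single match added during this stage, yielding EF1.

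The main obstacle is the phase 2 analysis on the $N$-side: the partial stage-only matching is not itself $X(\cdot)$ of any prefix or suffix of $\sigma$, so Lemma~\ref{lem:sigma} cannot be invoked directly. Identifying the common sub-bundle and factoring it out to reduce to the suffix matching $X(\sigma'')$ is the crucial bridge that lets the round-robin fairness guarantee of Lemma~\ref{lem:sigma} apply.
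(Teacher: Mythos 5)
Your proof is correct and follows essentially the same route as the paper's: full cancellation of completed stages (the paper phrases this as envy-freeness of $X^{t_s}$ at each stage boundary), Lemma~\ref{lem:sigma} on prefixes for phase~1, the reduction of the phase~2 $N$-side comparison to $X(\sigma'')$ for a suffix $\sigma''$ after factoring out the common sub-bundle (the paper's ``difference'' computation), and the one-extra-match observation for the $M$-side. The only point worth noting is that your explicit appeal to non-negativity of valuations in the $M$-side case analysis is a reasonable reading of an assumption the paper leaves implicit.
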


\begin{proof}
Let $\X = X^1, X^2\dots$ be the sequence of cumulative matchings. Let $t_s$ be the time at which stage $s$ begins: $t_1 = 0$, and $t_{s+1} = t_s + 2m$. We first show by induction that $X^{t_s}$ is fully envy-free for each stage $s$. Since $X^{t_1}_i = X^0_i = \emptyset$ for all $i$, this is trivially true for $s=1$. Now assume the claim holds for some stage $s$. During stage $s$, each agent in $N$ is matched to each agent in $M$ exactly once, and each agent in $M$ is matched to each agent in $N$ exactly once. Thus on stage $s$, each agent receives exactly the same set of matches. Since valuations are additive, we have $v_i(X_i^{t_{s+1}}) - v_i(X_i^{t_s}) = v_i(X_j^{t_{s+1}}) - v_i(X_j^{t_s})$ for any $i,j \in N$ or $i,j \in M$. Since $X^{t_s}$ is envy-free by assumption, we have $v_i(X_i^{t_s}) \ge v_i(X_j^{t_s})$; combined with the previous equation, this implies that $X^{t_{s+1}}$ is envy-free. This completes the induction.

\textbf{Claim 1: $X^t$ is EF1 with respect to $M$ at each time $t$.} On each stage, each agent in $M$ is matched to each agent in $N=\{1,2\}$ exactly once. Furthermore, every agent receives one match within the stage before any agent receives two. Therefore for $i,j \in M$, $i$'s envy for $j$ can always be eliminated by removing the one match $j$ has that $i$ does not have yet in stage $j$; if such a match does not exist, $i$ cannot envy $j$.

\textbf{Claim 2: $X^t$ is EF1 with respect to $N$ at all times $t$.} We know that at the beginning of each stage, $X^t$ is fully envy-free. Let $s$ be the stage during which $t$ occurs. During phase 1 of stage $s$, each agent $i$'s cumulative matching is $X_i^t = X_i^{t_s} \cup X_i(\sigma')$, for a prefix $\sigma'$ of $\sigma$. By Lemma~\ref{lem:sigma}, $X(\sigma')$ is EF1 with respect to $N$, so by additivity of valuations, $X^t$ is EF1 with respect to $N$. For each $i \in N = \{1,2\}$, let $\bar{i}$ denote the other agent in $N$. During phase 2, $i$'s cumulative matching is $X_i^t = X_i^{t_s} \cup X_i(\sigma) \cup X_{\bar{i}}(\sigma')$ for a prefix $\sigma'$ of $\sigma$. The difference between $X_i(\sigma) \cup X_{\bar{i}}(\sigma')$ and $X_{\bar{i}}(\sigma) \cup X_i(\sigma')$ is $X_i(\sigma'')$ for a \emph{suffix} $\sigma''$ of $\sigma$. Again by Lemma~\ref{lem:sigma}, $X(\sigma'')$ is EF1 with respect to $N$, so again by additivity of valuations, $X^t$ is EF1 with respect to $N$. Thus at all time steps $t$ during each stage $s$, $X^t$ is EF1 with respect to $N$.

Claims 1 and 2 together imply that $X^t$ is EF1 for all times $t$. Thus $\X$ is EF1-over-time, as required.
\end{proof}

\section{Conclusion}\label{sec:conclusion}

In this paper, we proposed a model of envy-freeness for repeated two-sided matching. For binary and symmetric valuations, we gave an algorithm that (1) satisfies EF1-over-rounds, (2) chooses a maximum weight matching for each time step, and (3) works even for dynamic valuations (Section~\ref{sec:sym-bin}). We also showed that without symmetry, (1) + (2) together and (1) + (3) together are each impossible. We will also gave algorithms for the cases where valuations are binary with ``only symmetric cycles" (Section~\ref{sec:asym-bin}), and when one side of the market has only two agents (Section~\ref{sec:n=2}). We are not aware of any prior models of almost envy-freeness for two-sided markets.

Our negative results for even binary valuations suggest that EF1-over-rounds may be too much to ask for. However, our results do not rule out the possibility of EF1-over-time, even for general additive valuations. More broadly, future work could investigate other possible fairness notions for this setting.

Another possible future direction concerns more general study of two-sided preferences. Envy-freeness is an example of a topic that has been widely studied for one-sided resource allocation, but not for two-sided markets. We wonder if there are other such topics that are worthy of study for two-sided preferences.



\bibliographystyle{plain}
\bibliography{refs}

\begin{thebibliography}{10}

\bibitem{survey2019}
Martin Aleksandrov and Toby Walsh.
\newblock Online fair division: A survey.
\newblock {\em arXiv preprint arXiv:1911.09488}, 2019.

\bibitem{Benade2018}
Gerdus Benade, Aleksandr~M. Kazachkov, Ariel~D. Procaccia, and
  Christos-Alexandros Psomas.
\newblock How to make envy vanish over time.
\newblock In {\em Proceedings of the 2018 ACM Conference on Economics and
  Computation}, EC '18, pages 593--610, New York, NY, USA, 2018. ACM.

\bibitem{Bokanyi2018}
Eszter Bok\'{a}nyi and Aniko Hann\'{a}k.
\newblock Evaluating algorithm fairness in an agent-based taxi simulation.
\newblock Working paper, 2018.

\bibitem{Bokanyi2019}
Eszter Bok\'{a}nyi and Aniko Hann\'{a}k.
\newblock Ride-share matching algorithms generate income inequality.
\newblock Working paper, 2019.

\bibitem{Brandt2016}
Felix Brandt, Vincent Conitzer, Ulle Endriss, J{\'e}r{\^o}me Lang, and Ariel~D
  Procaccia.
\newblock {\em Handbook of computational social choice}.
\newblock Cambridge University Press, 2016.

\bibitem{Budish2011}
Eric Budish.
\newblock The combinatorial assignment problem: Approximate competitive
  equilibrium from equal incomes.
\newblock {\em Journal of Political Economy}, 119(6):1061--1103, 2011.

\bibitem{Calo2017}
Ryan Calo and Alex Rosenblat.
\newblock The taking economy: Uber, information, and power.
\newblock {\em Columbia Law Review}, 2017.

\bibitem{Caragiannis2016}
Ioannis Caragiannis, David Kurokawa, Herv{\'e} Moulin, Ariel~D Procaccia,
  Nisarg Shah, and Junxing Wang.
\newblock The unreasonable fairness of maximum nash welfare.
\newblock In {\em Proceedings of the 2016 ACM Conference on Economics and
  Computation}, pages 305--322. ACM, 2016.

\bibitem{Chaudhury2020}
Bhaskar~Ray Chaudhury, Jugal Garg, and Kurt Mehlhorn.
\newblock {EFX} exists for three agents.
\newblock {\em arXiv preprint arXiv:2002.05119}, 2020.

\bibitem{Damiano2005}
Ettore Damiano and Ricky Lam.
\newblock Stability in dynamic matching markets.
\newblock {\em Games and Economic Behavior}, 52:34--53, 2005.

\bibitem{Echenique2006}
Federico Echenique and Jorge Oviedo.
\newblock A theory of stability in many-to-many matching markets.
\newblock {\em Theoretical Economics}, 1(2):233--273, 2006.

\bibitem{Fieseler2019}
Christian Fieseler, Eliane Bucher, and Christian~Pieter Hoffmann.
\newblock Unfairness by design? the perceived fairness of digital labor on
  crowdworking platforms.
\newblock {\em Journal of Business Ethics}, 156(4):987--1005, 2019.

\bibitem{Foley1967}
Duncan~K Foley.
\newblock Resource allocation and the public sector.
\newblock {\em Yale Econ Essays}, 7(1):45–98, 1967.

\bibitem{Gale1962}
David Gale and Lloyd~S Shapley.
\newblock College admissions and the stability of marriage.
\newblock {\em The American Mathematical Monthly}, 69(1):9--15, 1962.

\bibitem{Gamow1958a}
G~Gamow and M~Stern.
\newblock Puzzle-math, edn.
\newblock {\em Viking Adult, URL http://www. worldcat. org/isbn/0670583359},
  1958.

\bibitem{Hannak2017}
Anik\'{o} Hann\'{a}k, Claudia Wagner, David Garcia, Alan Mislove, Markus
  Strohmaier, and Christo Wilson.
\newblock Bias in online freelance marketplaces: Evidence from taskrabbit and
  fiverr.
\newblock In {\em Proceedings of the 2017 ACM Conference on Computer Supported
  Cooperative Work and Social Computing}, CSCW '17, pages 1914--1933, New York,
  NY, USA, 2017. ACM.

\bibitem{Haruvy2007}
Ernan Haruvy and M~Utku Ünver.
\newblock Equilibrium selection and the role of information in repeated
  matching markets.
\newblock {\em Economics Letters}, 94(2):284--289, 2007.

\bibitem{Konishi2006}
Hideo Konishi and M~Utku {\"U}nver.
\newblock Credible group stability in many-to-many matching problems.
\newblock {\em Journal of Economic Theory}, 129(1):57--80, 2006.

\bibitem{Lipton2004}
Richard~J Lipton, Evangelos Markakis, Elchanan Mossel, and Amin Saberi.
\newblock On approximately fair allocations of indivisible goods.
\newblock In {\em Proceedings of the 5th ACM conference on Electronic
  commerce}, pages 125--131. ACM, 2004.

\bibitem{Micali1980}
S.~{Micali} and V.~V. {Vazirani}.
\newblock An {$O(\sqrt{|V|} E)$} algorithm for finding maximum matching in
  general graphs.
\newblock In {\em Proc. 21st Annual Symp. Foundations of Computer Science (sfcs
  1980)}, pages 17--27, October 1980.

\bibitem{Moulin2004}
Herv{\'e} Moulin.
\newblock {\em Fair division and collective welfare}.
\newblock MIT press, 2004.

\bibitem{Patro2020}
Gourab~K Patro, Arpita Biswas, Niloy Ganguly, Krishna~P Gummadi, and Abhijnan
  Chakraborty.
\newblock Fairrec: Two-sided fairness for personalized recommendations in
  two-sided platforms.
\newblock In {\em Proceedings of The Web Conference 2020}, pages 1194--1204,
  2020.

\bibitem{Plaut2018}
Benjamin Plaut and Tim Roughgarden.
\newblock Almost envy-freeness with general valuations.
\newblock In {\em Proceedings of the Twenty-Ninth Annual ACM-SIAM Symposium on
  Discrete Algorithms}, pages 2584--2603. Society for Industrial and Applied
  Mathematics, 2018.

\bibitem{Roth1990}
Alvin~E Roth and Marilda Sotomayor.
\newblock {\em Two-sided Matching: a Study in Game-theoretic Modeling and
  Analysis}.
\newblock Cambridge University Press, 1990.

\bibitem{Sotomayor1999}
Marilda Sotomayor.
\newblock Three remarks on the many-to-many stable matching problem.
\newblock {\em Mathematical social sciences}, 38(1):55--70, 1999.

\bibitem{Sotomayor2004}
Marilda Sotomayor.
\newblock Implementation in the many-to-many matching market.
\newblock {\em Games and Economic Behavior}, 46(1):199--212, 2004.

\bibitem{steinhaus48}
Hugo Steinhaus.
\newblock The problem of fair division.
\newblock {\em Econometrica}, 16(1):101--104, 1948.

\bibitem{Suehr2019}
Tom S\"{u}hr, Asia~J. Biega, Meike Zehlike, Krishna~P. Gummadi, and Abhijnan
  Chakraborty.
\newblock Two-sided fairness for repeated matchings in two-sided markets: A
  case study of a ride-hailing platform.
\newblock In {\em Proceedings of the 25th ACM SIGKDD International Conference
  on Knowledge Discovery \& Data Mining}, KDD '19, pages 3082--3092, New York,
  NY, USA, 2019. ACM.

\bibitem{Wolfson2017}
Ouri Wolfson and Jane Lin.
\newblock Fairness versus optimality in ridesharing.
\newblock In {\em 2017 18th IEEE International Conference on Mobile Data
  Management (MDM)}, pages 118--123. IEEE, 2017.

\end{thebibliography}

\end{document}